\documentclass[sigplan,nonacm]{acmart}
\usepackage{amsmath}

\usepackage{amssymb}
\usepackage{tikz}
\usepackage{colortbl}
\usepackage{listings}
\usepackage{subcaption}
\usepackage{diagbox}

\definecolor{resultzero}{HTML}{DDFFCC}
\definecolor{resultpartial}{HTML}{FFF2CC}
\newcommand{\resultzero}[1]{\cellcolor{resultzero}#1}
\newcommand{\resultpartial}[1]{\cellcolor{resultpartial}#1}
\newcommand{\codehdr}[1]{\textbf{#1:}\ }
\newcommand{\coderule}{\hrule height0.35pt}
\newcommand{\maxreconv}{\texttt{Maximally\allowbreak ReconvergesKHR}}
\lstdefinelanguage{GPU}{
  morekeywords={if,else,int},
  sensitive=true,
  morecomment=[l]{//},
  morecomment=[s]{/*}{*/},
  morestring=[b]",
}
\lstdefinestyle{gpufigure}{
  language=GPU,
  basicstyle=\ttfamily\scriptsize,
  keywordstyle=\color{blue},
  commentstyle=\color{gray},
  stringstyle=\color{red},
  numbers=left,
  numberstyle=\tiny\color{gray},
  stepnumber=1,
  numbersep=5pt,
  backgroundcolor=\color{white},
  tabsize=2,
  showspaces=false,
  showstringspaces=false,
  xleftmargin=2em,
}
\usetikzlibrary{arrows.meta,shapes.geometric,positioning}

\makeatletter
\def\@ACM@checkaffil{}
\makeatother

\newif\ifshowappendix\showappendixfalse
\newif\ifshowdiscussion\showdiscussiontrue

\begin{document}
\title{WarpDRF: The Unwritten Contract of GPU Warp Programming}
\author{Zheyuan Chen}
\orcid{0009-0001-7942-4193}
\affiliation{%
  \institution{University of California at Santa Cruz}
}
\email{zchen406@ucsc.edu}

\author{Simon Kagle}
\affiliation{%
  \institution{University of California at Santa Cruz}
}
\email{skagle@ucsc.edu}

\author{Tiago Cogumbreiro}
\affiliation{%
  \institution{University of Massachusetts Boston}
}
\email{Tiago.Cogumbreiro@umb.edu}

\author{Tyler Sorensen}
\affiliation{%
  \institution{University of California at Santa Cruz}
}
\affiliation{%
  \institution{Microsoft Research}
}
\email{tsorensen@microsoft.com}

\begin{abstract}
Warp primitives such as tensor core operations, shuffles, reductions,
and barriers are critical to high-performance GPU kernels, and every major GPU
language supports some set of them. The threads that participate in a
primitive, and therefore synchronize, are determined dynamically by how threads
diverge and reconverge, and by intra-warp scheduling such as independent thread
scheduling. In practice, many high-performance kernels use these primitives and
behave as expected, following an intuitive but unwritten data-race-freedom
contract that has never been stated precisely or empirically tested.

We present WarpDRF, the first abstract warp programming model to make this
contract precise, with participation rules parameterized by reconvergence
guarantees and per-primitive requirements so that instantiations match
different GPU languages. We prove (formalized in Rocq) that a kernel
satisfying WarpDRF executes every warp primitive with the participants the
reference semantics assigns and produces the same results, so a programmer can
reason in the reference semantics alone. We implement
the model in MLIR with a reference interpreter and fuzz 10K conformance tests
for each of three configurations across CUDA, HIP, HLSL, Metal, and SPIR-V on 16
device and backend pairs;
at least one WarpDRF configuration describes each, with CUDA honoring the
strictly weakest configuration. Finally, we extend Faial, a static data-race analyzer for
CUDA, into
the first DRF checker for an empirically validated warp model, and apply it to
llama.cpp, where most kernels that use warp primitives already satisfy the
contract, but three contain previously unknown data races, showing the need for
tools that check it.
\end{abstract}

\maketitle 
\section{Introduction}

To achieve high performance on GPUs, developers need to program the lowest
unit of parallelism available, called a \emph{warp}\footnote{ The term warp
originates from CUDA~\cite{cuda:programming-guide} and is also known as
wavefront in AMD's HIP~\cite{hip:programming-model}, and as subgroup in
Khronos specifications~\cite{spirv:spec}. }, which defines a group of threads
that the hardware schedules as a single unit, usually 32.
GPU programming models therefore offer low-level concurrency control, called
\emph{warp primitives}, that define data communication and synchronization
between the threads in a warp, such as parallel matrix operations,
register exchanges among threads, parallel reductions, and all-to-all
synchronization.
General Matrix-Matrix Multiplications (GEMMs) and Attention (a neural-network
operation)~\cite{attention:nips17} dominate the runtime of modern AI
workloads, and these operations leverage warp primitives to achieve high
performance~\cite{tensor-cores:icpe24}.
For instance, we found warp primitives in a fifth of the functions of a
popular linear-algebra library that powers AI workloads, that is, 6,061 of
27,383 disassembled function bodies of NVIDIA's cuBLAS use warp primitives.

A warp primitive is a collective operation, executed jointly by a subset of a
warp's threads (the \emph{participants}), and completes only once every
participant has reached the primitive.
The set of participants is not fixed by the primitive but by control flow,
since branching determines which threads reach the primitive.
The participant set in turn determines both the value the primitive returns
and which threads the primitive orders.
Because the returned value may decide subsequent control flow, and hence
subsequent participant sets, the functional behavior of a primitive and the
orderings the primitive establishes cannot be treated separately.

Crucially, languages differ in what a programmer may assume about the
participant set.
SPIR-V provides an optional execution mode that requires maximal reconvergence
for entry points~\cite{spirv:max-reconvergence}, so that the
participant set follows from the structure of control flow, whereas CUDA
offers no reconvergence guarantee and instead requires the programmer to name
the intended participants in a mask~\cite{cuda:programming-guide}.
The participant set moreover depends on scheduling, not only on control flow.
NVIDIA documents independent thread scheduling, under which threads of a warp
may interleave at fine granularity rather than execute in lockstep as
programmers long assumed~\cite{cuda:programming-guide,chen2026simtstep}, so
that threads on the same control-flow path need not reach a primitive
together.

Data-race freedom solved an analogous problem for memory ordering. Rather
than reason about every schedule and every relaxation the hardware permits,
programmers order conflicting accesses through high-level synchronization
constructs and, in return, may reason as if execution were sequentially
consistent~\cite{adve1990weak}. That discipline has been carried onto GPUs,
but the memory models and analyzers built there treat synchronization as
occurring among every thread at a fixed level of the hierarchy, whether a
block reaching a barrier or a scope reaching an atomic, and have no notion
of an execution-determined subset of a warp
synchronizing~\cite{betts2012gpuverify,hower2014hrf,faial:fmsd23,liew2024faialaa}.

A separate line defines participation and identifies where primitives
can execute~\cite{chen2026simtstep,leejones2026uniformity}, but offers no rule
for a programmer to follow. The two have developed independently, and the
intuition that keeps deployed kernels correct has never been stated as a
data-race-freedom contract, nor has the code that relies on it been checked
against one.

\begin{figure}[t]
\centering
\begin{subfigure}[b]{0.48\linewidth}
\begin{lstlisting}[style=gpufigure,numbers=left,numbersep=5pt,xleftmargin=1.4em]
// executed by
// one warp per block

// warp divergent branch
if (cond)
  // write
  shared[i] = ...;

shfl_sync(...);

// read
... = shared[j];
\end{lstlisting}
\caption{}
\label{fig:mmids-code}
\end{subfigure}
\hfill
\begin{subfigure}[b]{0.50\linewidth}
\centering
\begin{tikzpicture}[
  blk/.style={draw,rounded corners,inner sep=2pt,font=\scriptsize,
              minimum width=16mm,align=center},
  >=Stealth,->]
  \node[blk] (c) at (0,0)    {\texttt{if (cond)}\\[-2pt]{\tiny$P=\mathrm{all}$}};
  \node[blk] (w) at (0,-1.2) {write shared\\[-2pt]{\tiny$P=S$}};
  \node[blk] (s) at (0,-2.4) {shuffle\\[-2pt]{\tiny$P=\;?$}};
  \node[blk] (r) at (0,-3.6) {read shared\\[-2pt]{\tiny$P=\;?$}};
  \draw (c) -- (w);
  \draw (w) -- (s);
  \draw (s) -- (r);
  \draw (c.east) to[out=0,in=0,looseness=1.3] (s.east);
  \draw[<->,dashed] (w.west) to[out=180,in=180,looseness=1.2]
       node[midway,above,sloped,font=\tiny] {conflict?} (r.west);
\end{tikzpicture}
\caption{}
\label{fig:mmids-cfg}
\end{subfigure}
\caption{Warp participation and synchronization in a pattern abstracted from
llama.cpp's \texttt{mm\_ids\_helper}. In (b), $P$ denotes the threads that
execute each dynamic basic block. A divergent subset $S$ writes shared memory
before a warp shuffle, after which the warp reads the array. The shuffle does
not order shared memory, so the write and read conflict. In CUDA, the shuffle's
bitmask names the required participants; WarpDRF-Faial checks both the
participation requirement and the missing memory ordering.}

\Description{A kernel excerpt writes a shared array under a data-dependent
guard, executes a warp shuffle, and then reads the shared array. Beside it, a
control-flow graph shows four nodes labelled with their participating threads:
the guard with all threads, the write with a subset, and the shuffle and the
read each with an unknown set. An edge bypasses the write, and a dashed edge
marks the conflict between the write and the read.}
\label{fig:mmids}
\end{figure}

\subsection{WarpDRF: Contract, Conformance, Code}

We present WarpDRF, a data-race-freedom contract for warp programming. WarpDRF
is a simplified execution model that hides low-level detail such as lockstep
execution and independent thread scheduling, replacing it with two dependent
conditions on memory ordering and warp primitive participation. We do not
capture any one language in full, but target a pragmatic subset that maps
straightforwardly onto CUDA, HIP, HLSL, Metal, and SPIR-V. We build an
executable model in MLIR and test 16 device and backend pairs against it, with
CUDA honoring the weakest configuration. We then produce the first static
checker for
an empirically validated warp programming model and apply it to the
performance-critical CUDA kernels of llama.cpp, a widely used open-source LLM
inference engine, where most kernels already satisfy the contract and three contain
previously unknown data race.

\paragraph{Defining the contract}
We define WarpDRF over a deliberately small language: threads grouped into
warps, ordinary memory accesses, warp primitives, and control flow whose
conditions may differ across threads. Warp primitives operate over participants,
the threads that execute them together, and optionally order their memory
accesses.

A kernel satisfies WarpDRF when two conditions hold in its reference execution 
for every input. First, conflicting memory accesses must be ordered, and warp
primitives can optionally be set to provide that ordering. Second, every dynamic
execution of a warp primitive, which we call an instance, has an unambiguous set
of participants, ranging from all threads in the warp to any valid subset. The
conditions are dependent, since a primitive orders memory only for its
participants. Figure~\ref{fig:mmids} highlights a real llama.cpp kernel in
which insufficient synchronization causes a bug, and participation assumptions
are non-portable.

Each language instantiates its own configuration of WarpDRF by
specifying when threads reconverge and what else it requires of a primitive's
participants. From those choices we derive when a kernel satisfies the contract.
We prove a guarantee for the contract: on any conforming system, a kernel that
satisfies WarpDRF forms the same participant set at every primitive and
computes the same results as the reference execution, so a programmer can
reason in the reference semantics alone. We formalize the proof in Rocq over
SIMT-Step's synchronous subgroup operations (\textsf{SSO})~\cite{chen2026simtstep}, in
which threads advance independently and a primitive waits for every thread that
may still reach it.

\paragraph{Testing implementations against the contract}
Specifications say little about warp execution, so we test whether
implementations respect the contract. WarpDRF gives each test one expected
result, allowing its reference interpreter to serve as the oracle.

Across 10K tests for each of three configurations on 16 device and backend
pairs, every language conforms to at least one of them. Empirically, CUDA honors
the strictly weakest configuration,
and thus that configuration is portable across every system we tested. In it,
warp primitives are valid only when every thread in the warp executes
them, ruling out non-uniform participation. Some languages additionally define
partial-participation behavior, but those guarantees are specific to the
language and execution mode.

\paragraph{Checking kernels against the contract}
We build a checker for the CUDA configuration by extending Faial, a static
data-race
analyzer that previously had no support for warp
primitives~\cite{faial:fmsd23}. We run it over llama.cpp and check the 95 kernels
that use warp primitives. After checking every reported alarm
against the kernel source and the current callers and launch configurations,
we find that 77 of the 95 analyzed kernels (81.1\%) have no unordered
conflicting memory accesses and execute every warp primitive with the full
warp.
The remaining 18 kernels map to four broader source patterns containing confirmed or input-conditional alarms (\S\ref{sec:llama-eval}).
\paragraph{Contributions}
In summary:

\begin{enumerate}
\item \textbf{WarpDRF} (\S\ref{sec:warpdrf-contract}): a data-race-freedom
contract in which every instance of a warp primitive has a single unambiguous
set of participants and every memory conflict is ordered, with configurations
for languages that reconverge differently, and a guarantee, formalized in Rocq,
that satisfying kernels match the reference semantics.

\item \textbf{An executable model} (\S\ref{sec:realizing-warpdrf}): an MLIR
dialect and interpreter to fuzz 10K tests for each of three
configurations over five languages and 16
device and backend pairs (\S\ref{sec:cross-platform}), where CUDA honors the
weakest configuration.

\item \textbf{A warp-aware race checker}
(\S\ref{sec:extended-faial-protocols}): the first built on an empirically
validated model, applied to 95 llama.cpp kernels
(\S\ref{sec:llama-eval}). After source and caller validation, 77 (81.1\%)
satisfy both conditions; the other 18 (18.9\%) are buggy, with confirmed or
input-conditional alarms across four source patterns.
\end{enumerate}

\section{Background}

\subsection{Execution Hierarchy and Synchronization Scopes}
\label{sec:background-scopes}

A GPU kernel executes a grid of logical threads organized into hierarchical
groups. CUDA and HIP call the larger group a \emph{thread block},
Metal calls it a \emph{threadgroup}, and SPIR-V calls it a \emph{workgroup}.
Implementations partition each group into tightly coupled execution
groups: CUDA uses \emph{warps}, AMD hardware uses \emph{wavefronts}, Metal uses
\emph{SIMD-groups}, and SPIR-V uses \emph{subgroups}
~\cite{cuda:programming-guide,hip:programming-model,metal:threadgroups,spirv:spec}.
We use \emph{block} and \emph{warp} for these concepts. We use \emph{warp
primitive} as an umbrella term for two categories. A \emph{warp collective},
such as a shuffle, ballot,
reduction, or cooperative matrix instruction, exchanges or combines values
across threads. A \emph{warp barrier} waits for participating threads and may
provide warp-scoped memory ordering.

The execution hierarchy determines the scope of communication and
synchronization. Registers are private to a thread, shared or workgroup memory is
addressable within a block, and global memory is addressable device-wide. A
block barrier such as \texttt{\_\_syncthreads()} coordinates the
threads in one block and orders their relevant memory accesses. A
memory-ordering warp barrier such as \texttt{\_\_syncwarp()} has narrower scope:
it coordinates and orders accesses only among threads of one warp. It cannot
establish ordering between different warps, even when they belong to the same
block~\cite{cuda:programming-guide,spirv:spec}.

Collective participation and memory ordering are separate effects. Shuffles,
ballots, and reductions in CUDA and HIP, their Metal SIMD-group counterparts,
and SPIR-V group-nonuniform operations exchange or aggregate register values but
do not by themselves order accesses to shared or workgroup memory. In
particular, the \texttt{\_sync} suffix on CUDA and HIP collectives identifies
participating threads; it does not turn the collective into a memory barrier
~\cite{cuda:programming-guide,hip:language-extensions,
metal:shading-language,spirv:spec}. Memory ordering instead requires an explicit
barrier carrying the appropriate scope and memory semantics. SPIR-V makes this
separation syntactic. Group-nonuniform operations carry only an execution scope,
while control barriers also carry a memory scope and memory semantics. Similarly,
Metal uses flags to select the memory classes ordered by a SIMD-group barrier.
WarpDRF therefore treats these warp collectives as participation events rather
than memory boundaries. Only barriers with documented memory semantics provide
scoped memory ordering.

\subsection{Divergence and Collective Participation}
\label{sec:background-participation}

Threads diverge when a thread-dependent branch or loop sends them along different
dynamic control-flow paths. An implementation may serialize those paths,
predicate instructions, allow threads to progress independently, or combine these
techniques. Reconvergence describes when the threads resume joint execution, but
GPU languages provide different guarantees about where this occurs. SPIR-V
identifies structured merge points, and a shader that declares the
\maxreconv{} execution mode requires threads to rejoin at
them as early as the program structure allows, whereas general CUDA does not
promise an analogous rejoining at arbitrary source-level
joins~\cite{cuda:programming-guide,spirv:spec}. SIMT-Step formalizes these
alternatives using dynamic blocks~\cite{chen2026simtstep}.

Table~\ref{tab:language-participation} summarizes how each API determines the
participants of a warp primitive under divergent control flow.

\begin{table*}[t]
  \centering
  \scriptsize
  \setlength{\tabcolsep}{3pt}
  \renewcommand{\arraystretch}{1.08}
  \caption{Selected specification-level participation rules for warp-scoped
  primitives
  ~\cite{cuda:programming-guide,hip:language-extensions,
  metal:shading-language,spirv:spec,spirv:max-reconvergence}. An explicit mask
  selects participants for one primitive; it does not provide a general
  reconvergence guarantee.}
  \label{tab:language-participation}
  \begin{tabular}{@{}>{\raggedright\arraybackslash}p{0.09\textwidth}
                      >{\raggedright\arraybackslash}p{0.18\textwidth}
                      >{\raggedright\arraybackslash}p{0.27\textwidth}
                      >{\raggedright\arraybackslash}p{0.38\textwidth}@{}}
    \toprule
    Model & Primitive class & Participants & Divergent-control guarantee \\
    \midrule
    CUDA & Masked collectives
      & Calling threads named by one consistent mask
      & Disjoint masks may execute divergently; every non-exited named thread must reach the same intrinsic with the same mask \\
    \midrule
    HIP & Cross-thread \texttt{\_sync} collectives
      & Active calling threads named by one consistent 64-bit mask
      & Every active named thread must execute the same collective with the same mask \\
    \midrule
    Metal & SIMD-group collectives
      & Active threads
      & Results range over the active threads at the dynamic call \\
    \midrule
    SPIR-V & Group-nonuniform operations
      & Invocations in the same scope-restricted dynamic instance
      & Structured merges can restore uniform flow; \texttt{Maximally\-ReconvergesKHR} further constrains dynamic grouping \\
    \midrule
    CUDA, Metal, SPIR-V & Cooperative matrix
      & Complete warp, SIMD-group, or matrix scope
      & Uniform execution is required \\
    \bottomrule
  \end{tabular}
\end{table*}

All four APIs support complete-group execution but differ on partial
participation. Section~\ref{sec:instantiating} represents these differences as
target configurations.

One source-level warp primitive may execute multiple times, for example when it
appears in a loop or a repeated function call. We call each execution a
\emph{primitive instance}. Each instance has its own participant set, and a
collective's result depends on both those participants and the values they
supply. Threads reaching the same source location through different loop
iterations or calls therefore need not belong to the same instance.
Section~\ref{sec:abstract-executions} formalizes these instances and their
participant sets.

Cooperative matrix operations make the full-group requirement explicit: CUDA WMMA
requires all threads in a warp to execute each operation, while SPIR-V requires all
invocations in the cooperative-matrix scope to be active or none
~\cite{cuda:programming-guide,spirv:spec}. Neither specification assigns portable semantics to
execution by a proper subset, which motivates WarpDRF's fully convergent contract.

\subsection{Why Memory DRF Is Insufficient}
Two ordinary memory accesses conflict when they touch the same location and at
least one is a write. A conflicting pair is a data race unless one access
happens before the other. To construct this relation, a DRF model begins with
the synchronization operations in an execution and the ordering that each one
provides. For example, a block barrier is a block-scoped synchronization
operation by definition, so it can order accesses by threads in that block.
GPU DRF models account for the hierarchy by restricting such ordering to scopes
such as a workgroup or device~\cite{hower2014hrf,faial:fmsd23}.

Warp primitives complicate this starting point. Unlike a block barrier, a warp
primitive may execute under divergent control flow, and any coordination or
memory ordering applies only to threads that execute it together.
A memory-race analysis must therefore determine the group
before using a warp primitive to order accesses.

Figure~\ref{fig:two-thread-barriers} illustrates how divergence and
reconvergence determine that group, and which memory accesses the resulting
barrier orders. It uses a warp of exactly two threads, 0 and 1. The array
\texttt{g} is shared, while \texttt{local} and \texttt{local2} are local
variables of each thread. Here \texttt{warp\_sync} orders memory among its
participants. Assume barriers permit partial participation and the threads
rejoin after the conditional. A therefore contains only thread 0, whereas B
contains both threads. Reconvergence determines the group at B; it does not
itself provide B's memory ordering.

\begin{figure}[t]
  \centering
  \begin{minipage}{0.96\linewidth}
    \centering
    \codehdr{Initial}\texttt{g[0] = g[1] = 0;}
    \coderule
    \begin{lstlisting}[style=gpufigure]
g[tid] = tid + 1;
if (tid == 0) {
    warp_sync();          // A: thread 0.
    local = g[1];
}
warp_sync();              // B: threads 0 and 1.
local2 = g[1 - tid];
    \end{lstlisting}
    \coderule
  \end{minipage}
  \caption{Barrier placement in a two-thread warp, assuming barriers permit
  partial participation and the threads rejoin after the conditional. A
  synchronizes only thread 0. B orders the initial writes before the final
  reads, but neither barrier orders thread 1's write before the branch read.}
  \Description{A program for threads 0 and 1, with both entries of shared
  array g initially zero. Each thread writes its own entry.
  Thread 0 synchronizes alone and reads thread 1's entry,
  then both threads synchronize and read each other's entries.}
  \label{fig:two-thread-barriers}
\end{figure}

Thread 1's write \texttt{g[1] = 2} and thread 0's branch read
\texttt{local = g[1]} access the same location. A supplies no memory ordering
between these threads, because thread 1 does not participate. Both accesses
precede B, so B does not order them either: the write and read race. In
contrast, B orders each initial write before the other thread's final read
into \texttt{local2}. Removing A leaves the memory ordering unchanged;
removing B also leaves those final reads unordered with the other thread's
writes.

Removing A and moving B immediately after the initial writes, before the
conditional, orders every conflicting write--read pair. The branch can remain
divergent, and no barrier is needed afterward because only reads remain.
The moved B also runs in warp-uniform control, so its group no longer depends
on reconvergence. The original A runs with only part of the warp, which is
invalid if the primitive requires full-warp participation.

The participant set also matters for primitives that do not order memory: a
collective reached by only part of a warp can be memory-race-free while
producing a register value that depends on how the threads are grouped.

WarpDRF establishes a portable fully convergent domain by requiring every
executed warp primitive to include the complete warp. This rule captures a
common kernel idiom rather than requiring the whole kernel to be uniform: threads
may compute different operands and take different paths between primitives, but
invoke each warp primitive in warp-uniform control.
The moved B in Figure~\ref{fig:two-thread-barriers} follows this
rule; A does not.
This fixes the participants to which a warp primitive applies, without relying
on API-specific rules for reconverging divergent threads. WarpDRF then checks
that conflicting memory accesses are ordered at an appropriate scope.
Section~\ref{sec:warpdrf-contract} formalizes this full-warp contract and
separately describes how to reason about programs that intentionally rely on
partial participation.

\section{The WarpDRF Contract}
\label{sec:warpdrf-contract}

WarpDRF retains the classic data-race-freedom argument: programmers use
high-level synchronization to order conflicting memory accesses rather than
reason about low-level execution schedules~\cite{adve1990weak}. What this
work adds is a definition of that synchronization. Classic models take their
synchronization operations as given: a lock or a block barrier is a named
operation whose ordering rules are stated in advance. For a warp primitive,
which calls make up a single operation is in question, because any coordination or memory ordering applies only to 
threads that execute it together.

WarpDRF therefore fixes that grouping before using it. Its reference semantics
identifies the threads that execute each primitive together, which makes
participation predictable under divergent control flow. Each language then
instantiates the contract: from its specification, such as SPIR-V's maximal
reconvergence, we derive which participant groups that language defines. Once
those groups are fixed, ordering conflicting accesses is the classic argument.
Language specifications are often informal, so these derivations are
best-effort. In Section~\ref{sec:realizing-warpdrf} we design a conformance
testing methodology and show that some variant of them holds across every
language and device we tested.

\subsection{The WarpDRF Language}
\label{sec:core-language}

Figure~\ref{fig:core-language} defines the small language over which WarpDRF is
stated. It retains the three features relevant to the contract: memory
accesses, warp primitives, and structured control flow whose conditions may
evaluate differently across threads. Scalar values are integers, and any other
pure thread-local computation is an uninterpreted function \(f\): calls to the
same \(f\) with equal arguments produce equal results, and the semantics
assumes no other property. This represents operations such as floating-point
arithmetic without defining their numerical behavior, keeping the formalization
focused on participation, memory ordering, and control flow.

A call to \(p\) denotes a warp primitive and optionally returns a value. Given
a set of participants, which the reference semantics defines, its result is an
uninterpreted function over the values those participants provide.

Languages and devices differ, so our formalism is parameterized by a
configuration \(C\) that aims to capture those differences. A configuration
supplies:
\begin{itemize}
\item the warp width, which is 32 on many systems, famously NVIDIA, but can be
  any number; some devices are as large as 128;
\item \(\mathsf{ord}(p)\), whether a primitive \(p\) orders memory among its
  participants;
\item \(\mathsf{req}(p)\), which participant sets are legal for \(p\); and
\item how participation is defined under divergence.
\end{itemize}

In CUDA, \texttt{\_\_shfl\_sync} only exchanges values and has
\(\mathsf{ord}(p)=\mathsf{nosync}\), whereas \texttt{\_\_syncwarp} has
\(\mathsf{ord}(p)=\mathsf{sync}\). Tensor core operations have
\(\mathsf{req}(p)=\mathsf{full}\), requiring every thread in the warp, while
shuffles and votes may be called with a partial group. The last item is the
least settled: SPIR-V's optional \maxreconv{} mode requires
reconvergence according to its structured rules, whereas CUDA provides no
analogous general guarantee, which we investigate empirically in
Section~\ref{sec:cross-platform}.

Following SIMT-Step,
the language assumes structured control flow
with explicit merge points~\cite{chen2026simtstep}; unstructured control flow is
outside this formalization. None of the 64 CUDA source files contributing the
479 kernels in our study uses \texttt{goto}, so this restriction
excludes no source-level control flow encountered in that evaluation. Atomics,
asynchronous copies, and cross-block synchronization are also outside the
language. These features occur in only six of the 479 kernels, none of which
uses a warp primitive, so they exclude no kernel from the 95-kernel evaluation
set. Prior work shows how relaxed GPU atomics can be incorporated into a
data-race-freedom model~\cite{sinclair2017rats}; we leave that extension to
future work.

\begin{figure}[t]
\footnotesize
\[
\begin{array}{r@{\;}c@{\;}l@{\quad}l}
\multicolumn{4}{@{}l}{
  n \in \mathit{Int} \;\; x \in \mathit{Var} \;\; a \in \mathit{Arr} \;\;
  f \in \mathit{Fun} \;\; p \in \mathit{Prim}} \\
\multicolumn{3}{@{}l@{\quad}}{
  \mathsf{ord} : \mathit{Prim} \to \{\mathsf{sync},\mathsf{nosync}\}}
  & \text{orders memory?} \\
\multicolumn{3}{@{}l@{\quad}}{
  \mathsf{req} : \mathit{Prim} \to \{\mathsf{full},\mathsf{any}\}}
  & \text{participant constraint} \\[7pt]
e &::=& n \mid x \mid \mathsf{tid} \mid e \oplus e \mid f(e,\ldots,e)
  \\[7pt]
s &::=& x := e & \text{local assignment} \\
  &\mid& x := a[e] \mid a[e] := e & \text{memory access} \\
  &\mid& [\,x := ]\; p(e,\ldots,e) & \text{warp primitive} \\
  &\mid& \mathsf{if}\;e\;\mathsf{then}\;s\;\mathsf{else}\;s
  & \text{conditional} \\
  &\mid& \mathsf{switch}\;e\;\{\,n \Rightarrow s,\ \ldots\,\}
  & \text{multiway branch} \\
  &\mid& \mathsf{while}\;e\;\mathsf{do}\;s & \text{loop} \\
  &\mid& s\,;\,s \mid \mathsf{skip} \mid \mathsf{break} \mid
         \mathsf{return} & \text{sequencing, exit}
\end{array}
\]
\caption{The WarpDRF language. Here \(\mathsf{tid}\) is the linearized thread
identifier, \(\oplus\) ranges over integer arithmetic and comparisons, and a
primitive's result binding is optional.}
\Description{A BNF grammar for the WarpDRF language. Metavariables range over
integers, local variables, arrays, uninterpreted function symbols, and warp
primitives. A function ord maps each warp primitive to sync or nosync, and a
function req maps each warp primitive to full or any.
Expressions are integer literals, variables, the thread identifier, binary
operations, and applications of uninterpreted function symbols. Statements are
local assignment, array read, array write, warp primitive call with an optional
result binding, conditional, switch, loop, sequencing, skip, break, and
return.}
\label{fig:core-language}
\end{figure}

\paragraph{Correspondence to real languages}
Our language is intentionally small and general. Given the subtle differences
between real GPU languages, we do not present it as a complete formalization of
any one of them, but it covers the constructs used by many impactful kernels
(Section~\ref{sec:llama-eval}). The mapping is most direct for SPIR-V, Metal,
and HLSL: array accesses become loads and stores, a warp corresponds to a
subgroup or SIMD-group, and primitive calls become the corresponding subgroup
operations. CUDA and HIP further allow a call to name its participants with a
bitmask. We do not model bitmasks, since no other language supports them; in
practice the value passed is frequently \texttt{\_\_activemask()}, which reports
whichever threads control flow has already brought to the call rather than
requiring a particular set. Section~\ref{sec:instantiating} examines what CUDA
guarantees about that set, and Section~\ref{sec:test-pipeline} implements these
mappings.

\paragraph{Reference semantics}
We adopt the bulk-synchronous, block-level semantics used by earlier GPU
analysis tools~\cite{betts2012gpuverify,li2012gklee,collingbourne2013interleaving}:
one thread runs until it
reaches a barrier, then another runs until it reaches a barrier, and the
barrier is released once every thread has arrived. There is no interleaving,
only a sequentialized execution of barrier intervals. WarpDRF derives
participation from control flow rather than from a declaration and extends this
semantics to warps by replacing block barriers with the participants of a warp
primitive.

The difficulty is that, unlike a block barrier, a warp primitive admits partial
participation. Prior work describes correct use of block barriers with notions
such as textual alignment~\cite{kamil2006concurrency}, which asks that every
thread reach the same barrier in the same loop iteration. To extend that idea
to warps we borrow
\emph{dynamic basic blocks}~\cite{chen2026simtstep}, which lift SPIR-V's notion
of a dynamic instruction instance~\cite{spirv:dynamic-instance} to a basic
block: each execution of a basic block
creates a new dynamic block, and a dynamic block is executed by a group of
threads rather than by one. The entry point of the program is a dynamic block
executed by all threads, and a branch creates dynamic blocks executed by
the threads that take the same path.

Dynamic blocks give us the machinery to define participation. If a warp
primitive occurs in a dynamic block executed by the thread set \(T\), and \(C\)
permits that primitive to occur there, then its participants are \(T\).
Distinct loop iterations and repeated calls produce distinct dynamic blocks, so
threads that reach the same source location by different paths are not grouped
automatically. Section~\ref{sec:cuda-sensitivity} tests this behavior.

This extends the bulk-synchronous semantics of earlier GPU
analyses~\cite{betts2012gpuverify,li2012gklee}: SIMT-Step's dynamic
blocks~\cite{chen2026simtstep} determine each warp primitive's group instead of
assuming that every synchronization operation has a fixed group.

The reference semantics therefore picks threads in turn and runs each until it
reaches a warp primitive, updating the set of dynamic blocks as it goes. A
primitive is released once all of its participants have arrived, which the
semantics cannot determine until the paths of the remaining threads are
resolved. Threads are picked in a fixed order, so a kernel has exactly one
reference execution for a given input.

\subsection{The Contract}
\label{sec:abstract-executions}

WarpDRF states its conditions over the reference semantics rather than over the
executions a target system may produce. Classic DRF is stated the same way: a
program is data-race-free if no sequentially consistent execution has a race,
and a guarantee then transfers that reasoning to real implementations. Both
conditions are properties of the kernel. First, conflicting memory accesses
must be ordered. Second, every warp primitive must run with a group that the
configuration \(C\) guarantees. The second condition is new. Without it a
kernel could rely on a grouping that \(C\) leaves open, and a target system
free to group threads differently would change either the memory ordering or
the result of a primitive.

Write \(E\) for the reference execution of a kernel on a given input, and let
\(M_E\) be the memory events in \(E\). Every array read or write produces an
event recording its thread, location, and access mode, either read or write.
Every executed warp primitive has the participant set determined by its dynamic
block, as defined in Section~\ref{sec:core-language}.

The happens-before relation \(\mathsf{hb}_E\) is transitive, and contains each
thread's program order and the ordering added by warp primitives with
\(\mathsf{ord}(p)=\mathsf{sync}\), which applies only among that primitive's
participants.

\paragraph{Condition 1: memory DRF}
Two events in \(M_E\) conflict when they come from different threads, access the
same location, and at least one is a write. An execution is
memory-data-race-free when every conflicting pair is ordered:
\[
\begin{aligned}
\mathsf{MemDRF}(E)\triangleq \forall e,f\in M_E.\quad&
  \mathsf{conflict}(e,f)\Rightarrow{} \\
& e\mathrel{\mathsf{hb}_E}f
  \lor f\mathrel{\mathsf{hb}_E}e.
\end{aligned}
\]
This is the standard DRF condition. WarpDRF changes neither the definition of a
conflict nor the requirement that conflicts be ordered. Its effect on this
condition is that ordering from a warp primitive applies exactly to
that primitive's participants.

\paragraph{Condition 2: unambiguous participation}
Participation is unambiguous when every warp primitive in the reference
execution has a participant group that \(C\) guarantees, and that group
satisfies the primitive's constraint
\(\mathsf{req}(p)\). Which groups those are depends on the configuration;
Section~\ref{sec:instantiating} defines a portable complete-warp case and
language-specific cases that admit partial participation.

\paragraph{Combining the conditions}
A kernel satisfies WarpDRF for configuration \(C\) when its reference execution
is memory-data-race-free and its participation is unambiguous. The two
conditions are dependent: \(\mathsf{hb}_E\) is built from warp primitives, so
the memory condition cannot be evaluated until participation is settled. Even
for
a primitive that does not order memory, participation still matters, because
changing the group can change the primitive's result.
Figure~\ref{fig:mmids} shows the two conditions interacting in a real kernel.

\subsection{The Guarantee}
\label{sec:guarantee}

The guarantee is the classic DRF one~\cite{adve1990weak}: a kernel that
satisfies WarpDRF produces the same reads and the same final memory on a target
system
as it does in the reference semantics, so a programmer can reason in the
reference semantics alone. Earlier GPU semantics establish analogous agreement
between lockstep and interleaving executions for race-free
kernels~\cite{collingbourne2013interleaving,kojima2017hoare}.

The novel component of WarpDRF is that participation must be settled before the
memory argument can run. A target system is \emph{conforming} when it behaves as
its configuration \(C\) describes, does not let a primitive instance take effect
until every thread that may still reach it has arrived or taken a path that
excludes it, provides the ordinary DRF
guarantee with warp primitives supplying the synchronization edges among their
participants, and preserves the specified behavior of operations under
compilation. The last two are standard.

\begin{theorem}[WarpDRF guarantee]
\label{thm:warpdrf-guarantee}
If a kernel satisfies WarpDRF for configuration \(C\), then on a conforming
target system it executes each warp primitive with the participants that the
reference semantics assigns, and produces the same reads and the same final
memory as the reference execution.
\end{theorem}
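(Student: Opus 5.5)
The argument splits the guarantee into two layers, one per condition of the contract. The first layer settles participation. The second then reruns the classic DRF argument, with the synchronization edges that participation fixes.

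\textbf{Participation layer.} Fix an input and an arbitrary execution \(E'\) of the kernel on a conforming target. Order the primitive instances of the reference execution \(E\) by their release order in \(E\). We prove by strong induction on this order a simulation invariant: for each instance, \(E'\) executes the matching instance with the same participant set, each participant supplies the same operands, and so it produces the same result.

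For the inductive step, consider the prefix of each thread up to its next primitive. Every event in that prefix is either thread-local or a memory access. Local computation agrees by determinism of expressions and uninterpreted functions, given equal inputs. Reads agree by a nested appeal to the memory layer restricted to the prefix. Hence each thread takes the same branches, reaches the same dynamic block, and creates the same dynamic-block tree as in \(E\). Condition~2 says the reference group is one that \(C\) guarantees and satisfies \(\mathsf{req}(p)\). Conformance then says the target cannot fire the instance early: it waits for every thread that may still reach it, and it forms the group \(C\) describes. Together these force the participant set of \(E'\) to equal that of \(E\). The result, an uninterpreted function of the same participant values, therefore agrees.

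We formalize this over \textsf{SSO}. Threads advance independently and a primitive blocks until no further thread can join. The invariant becomes a relation between SSO configurations and the sequentialized reference state, and we show it is preserved by each SSO step.

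\textbf{Memory layer.} Once participants agree, the sync edges of \(E'\) coincide with \(\mathsf{hb}_E\) restricted to corresponding events. Condition~1 gives \(\mathsf{MemDRF}(E)\). We use the standard DRF property of conforming targets, namely that race-free executions behave sequentially consistently with respect to hb. From it, every read in \(E'\) returns the value of the hb-latest conflicting write, which is exactly the write it reads in \(E\). Because all conflicting writes are totally ordered by hb, the final value of each location is well defined and equal in both executions.

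\textbf{Main obstacle.} The hard part is the circularity. Reads determine control flow, control flow determines participants, and participants determine hb, which in turn justifies the reads. The standard DRF guarantee is stated for a fixed hb, so a naive two-phase proof is circular.

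We break the circle by interleaving both layers in a single induction over barrier intervals. Any candidate race in \(E'\) sits in a shortest prefix where the executions first diverge. Minimality forces that prefix to agree with \(E\) on participants, so its hb equals \(\mathsf{hb}_E\) there. A race in this prefix then contradicts \(\mathsf{MemDRF}(E)\).

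A second subtlety is showing that conformance's "may still reach it" matches the reference notion of the dynamic-block participant set. In particular, threads that exit or break must be correctly excluded. We would first try to discharge this with a lemma that SSO's reachability predicate coincides with dynamic-block membership under structured control flow.
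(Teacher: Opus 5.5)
Your outline is sound, but it takes a different route from the paper. The paper handles the mismatch between the target's schedule and the reference's with commutation lemmas:
\begin{itemize}
\item an enabled action stays enabled after any other action, because a thread excluded from an instance never becomes a possible participant again;
\item two adjacent independent steps can be exchanged, because reference memory DRF rules out an adjacent conflicting pair (nothing could order it);
\item so the single reference execution can be rewritten into one whose prefix is exactly any given \textsf{SSO} prefix.
\end{itemize}
It then fixes delayed visibility concretely and proves read agreement by induction over target steps. That induction uses a logical-memory invariant and the lemma that a pending write cannot conflict with another thread's access before its author synchronizes. It never assumes a DRF-SC property of the target.

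You instead compare per-thread histories and take the first target step at which anything deviates. Agreeing per-thread sequences, with primitive events shared among participants, make the target's happens-before coincide with $\mathsf{hb}_E$, so a race there would be a race in $E$. You then delegate read values to the ordinary DRF guarantee in the conformance hypothesis. Your version is more modular: any memory model that behaves SC on race-free prefixes works, and no exchange lemmas are needed. The paper's version is self-contained for its deliberately weak model, is what the Rocq development mechanizes, and yields the stronger fact that every target prefix is literally a prefix of a reordered reference execution.

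To make your version rigorous, state what it leaves implicit:
\begin{itemize}
\item The induction must run over target steps, not over $E$'s release order. \textsf{SSO} may release disjoint partial groups in a different order. Whether another thread's write can precede a read made before instance $i$ is settled by the waiting rule at instances that $E$ releases after $i$. Likewise, ``barrier intervals'' are not globally defined once groups are partial.
\item The set of events in a target prefix must be downward closed under $\mathsf{hb}_E$. This is what guarantees that the hb-latest write to the location is the one $E$ reads. That write may be the reader's own, which is not ``conflicting'', so drop that word.
\item The ordinary DRF guarantee is program-level. Either assume its per-execution form, or first run your minimality argument over all SC interleavings the waiting rule allows, to show they are all race-free.
\item Over \textsf{SSO}, participant agreement comes from the waiting rule plus per-thread agreement, not from Condition~2. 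Condition~2 only certifies that the agreed groups are legal for $C$.
\item The ``may still reach'' test need only be sound. It over-approximates, so it does not coincide with dynamic-block membership before the instance fires. The paper avoids relating it to dynamic blocks by defining the reference with the same waiting rule under a fixed scheduler.
\end{itemize}
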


The theorem does not require every participant group to be a complete warp. If
\(C\) enables maximal reconvergence in SPIR-V, the theorem covers the partial
groups guaranteed by that mode, provided the reference semantics determines
those groups.

\paragraph{Instantiating conformance}
Theorem~\ref{thm:warpdrf-guarantee} is stated against conformance, a property
we require of a target, not one we establish. Before outlining a proof we
therefore fix a base semantics, instantiating conformance with a concrete warp
execution model. SIMT-Step provides a family of such models for a single
warp, totally ordered by strength, where a stronger model restricts when a
primitive may execute and therefore admits fewer
executions~\cite{chen2026simtstep}. We examine two members of this family. The
first is too weak to support WarpDRF, and the next strongest is sufficient.
Both use the same memory model, which we
deliberately make weak. Following GPUVerify, a write stays private to its
thread until that thread participates in a memory-ordering primitive, and an
execution aborts as soon as it races~\cite{betts2012gpuverify}. We call this
\emph{delayed visibility}.

\paragraph{Speculative execution is too weak}
The weaker of the two is speculative subgroup operations (\textsf{Spec}), which
executes a primitive with the threads currently known to participate,
speculating that no additional threads will join.
Figure~\ref{fig:spec-counterexample} shows that this freedom is enough to break
the guarantee. Threads 0 and 1 form the warp and \(g[0]\) starts at
0. Each thread reads \(g[0]\) and executes a memory-ordering
\texttt{warp\_sync} if it read 0, after which thread 0 writes 1. Write
\(R_t(v)\) and \(W_t(v)\) for a read and a write of \(g[0]\) by thread \(t\),
and \(S(P)\) for the primitive with participants \(P\).

\begin{figure}[t]
  \centering
  \begin{minipage}{0.9\linewidth}
    \centering
    \codehdr{Initial}\texttt{g[0] = 0;}
    \coderule
    \begin{lstlisting}[style=gpufigure]
int r = g[0];
if (r == 0) {
    warp_sync();
    if (tid == 0)
        g[0] = 1;
}
    \end{lstlisting}
    \coderule
  \end{minipage}
  \caption{Speculative execution can violate the WarpDRF guarantee. This
  adaptation of SIMT-Step's Figure~10~\cite{chen2026simtstep} uses ordinary
  memory accesses and one writer. Threads 0 and 1 form the complete warp;
  \texttt{warp\_sync} orders memory among its participants.}
  \Description{Both threads read g[0] and, if it is zero, execute warp\_sync.
  Thread 0 then writes one. Under Spec, thread 0 can synchronize alone and
  write before thread 1 reads, so thread 1's read races with the write.}
  \label{fig:spec-counterexample}
\end{figure}

The reference execution is \(R_0(0), R_1(0), S(\{0,1\}), W_0(1)\). Thread 0
waits until thread 1 has decided whether to join, so both read 0 and both
participate. The only cross-thread conflict is thread 1's read against thread
0's write, and the primitive orders it. The kernel satisfies both conditions in
either configuration. \textsf{Spec} also permits \(R_0(0), S(\{0\}), W_0(1), R_1(0)\).
Here thread 0 executes the primitive alone while thread 1 has not yet decided,
then writes 1. That write is still private, so thread 1 reads 0, and the
primitive did not include thread 1, so nothing orders the write before the
read. The two race and the execution aborts. Thread 1 never reaches the
primitive, so \textsf{Spec}'s assumption that it would not join is never contradicted,
and the execution is legal. A kernel that satisfies both conditions can
have a \textsf{Spec} execution that races and forms a group the reference never
assigns. Therefore \textsf{Spec} does not provide a sound base semantics for
WarpDRF.\ifshowappendix\ Appendix~\ref{sec:mechanized-spec} states the
counterexample formally.\fi

\paragraph{Synchronous subgroup operations}
The next strongest model is synchronous subgroup operations (\textsf{SSO}). An instance
executes only when every thread that may still reach it is already waiting
there. Deciding which threads those are is the delicate part of the semantics:
the test inspects both arms of an unresolved branch and the relevant loop
iterations, and it does not predict the result of a future read. The rule
constrains only the moment at which a primitive executes. Between primitives,
threads advance independently and in any order, so \textsf{SSO} is far weaker than
lockstep and admits the fine-grained interleaving that independent thread
scheduling exposes. \textsf{SSO} is nonetheless strong enough to support WarpDRF, and
it remains strong enough under delayed visibility.

\begin{theorem}[\textsf{SSO} supports WarpDRF]
\label{thm:sso-supports}
Fix an input to a single-warp kernel whose reference execution terminates and
satisfies both WarpDRF conditions, for either the fully convergent or the
partially convergent configuration. Then no \textsf{SSO} execution with delayed
visibility races, and every terminating one assigns the same participants to
each instance, returns the same reads to each thread, and produces the same
final memory as the reference.
\end{theorem}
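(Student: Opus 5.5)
The plan is a simulation argument by induction on the length of an arbitrary \textsf{SSO} execution prefix with delayed visibility. The invariant relates that prefix to the reference execution $E$. Concretely, for every thread $t$, the sequence of events $t$ has performed in the \textsf{SSO} prefix is a prefix of $t$'s event sequence in $E$. Within that, every read returned the value $E$ assigns, every branch took the direction $E$ takes, and every completed primitive instance had exactly the participant set that $E$'s dynamic block assigns. It also had the same result. The uninterpreted functions $f$ and the primitive result functions make this last clause immediate once participants and argument values agree.

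Two supporting invariants travel with it. First, each dynamic block in the \textsf{SSO} state corresponds to one in $E$: it was created by the same sequence of branch outcomes and loop iterations. Its thread set is a subset of the reference one, consisting of the threads that have reached it so far. Second, the visible memory state of each thread equals the reference memory restricted to writes that are $\mathsf{hb}_E$-before that thread's current point. Since the initial state trivially satisfies all of this, the induction proceeds by cases on the next \textsf{SSO} step.

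The cases are as follows.
\begin{itemize}
\item Local assignments and branches follow from thread-local determinism under the invariant.
\item Writes stay private under delayed visibility, so they preserve the invariant directly.
\item For a read of location $\ell$ by $t$, I must show no race abort and the correct value. Suppose another thread $u$ has performed a conflicting write to $\ell$ in the prefix. By the first invariant that write is an event of $E$, so by $\mathsf{MemDRF}(E)$ it is $\mathsf{hb}_E$-ordered with the read. If the write comes first in $\mathsf{hb}_E$, the order is realized through a chain of $\mathsf{sync}$ instances that have already completed in \textsf{SSO}, with the same participants. So the write is visible, and it is the $\mathsf{hb}$-latest one, which gives the reference value. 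If the read comes first in $\mathsf{hb}_E$, the write must follow a $\mathsf{sync}$ instance that includes $t$ after its read. That instance cannot yet have executed, so this case is impossible.
\item The primitive step is where the \textsf{SSO} rule is used.
\end{itemize}

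The main obstacle is the primitive step. We must show that when \textsf{SSO} releases an instance, its waiting set equals the reference participant set $T$. The inclusion of the waiting set in $T$ follows from the dynamic-block invariant. For the reverse inclusion, suppose some $u\in T$ is not yet waiting. Then $u$ is at a point that in $E$ leads, via its reference path, into this dynamic block. The \textsf{SSO} may-reach test inspects both arms of unresolved branches and the relevant loop iterations without predicting reads, so it overapproximates every continuation, including $u$'s reference one. Hence $u$ would be counted as possibly reaching, contradicting the release.

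The delicate sub-lemma, which I would isolate and prove first, is that the may-reach relation is monotone and sound with respect to any concrete continuation. It needs a careful statement about loop iteration indices and \textsf{break}/\textsf{return}, so that $u$ cannot instead join a different dynamic block of the same source primitive. Condition 2 then guarantees that $T$ is legal for $\mathsf{req}(p)$ in the chosen configuration, fully or partially convergent. So \textsf{SSO} permits the instance, and nothing is stuck or misgrouped.

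Finally, consider a terminating \textsf{SSO} execution. Each thread's sequence is a prefix of its reference sequence and ends at termination. Because $E$ terminates deterministically, the two sequences must coincide. Hence the participants, per-thread reads, and final memory agree with $E$. For final memory, each location's last write is determined by the total $\mathsf{hb}_E$ order on conflicting writes, which MemDRF supplies, and termination flushes visibility.
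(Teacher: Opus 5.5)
Your argument is sound in outline, but the memory half takes a different route from the paper's.

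\textbf{How the paper proceeds.} The paper never reasons about \(\mathsf{hb}_E\)-latest writes. It first proves commutation lemmas for reference steps under arbitrary scheduling. An enabled action stays enabled, because a thread excluded from an instance never again becomes a possible participant. Adjacent steps can be exchanged, because memory DRF rules out an adjacent unordered conflict. From these it deduces that every target prefix is a prefix of some permutation of the finished reference that is still memory-DRF. It then carries one global invariant: \emph{logical memory} (shared memory with pending writes applied in issue order) equals the memory of the matching reference steps, and all pending writes to one address belong to one thread. The three conclusions follow from this:
\begin{itemize}
\item read agreement comes from a lemma that a pending write cannot meet another thread's access before its author syncs, applied to the permuted trace;
\item race freedom holds because target events form a prefix of a DRF trace;
\item final memory follows from state equivalence.
\end{itemize}

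\textbf{How your route differs.} You simulate directly against the fixed \(E\) with per-thread prefixes and read values off as the \(\mathsf{hb}_E\)-latest write. This avoids the exchange lemmas. It shifts the work onto one causality lemma that you use without stating: the events performed by an \textsf{SSO} prefix are downward closed under \(\mathsf{hb}_E\). That lemma follows by backward induction along an \(\mathsf{hb}_E\) chain, since completed instances have the reference participants. Your primitive step is the same fact that drives the paper's persistence lemma. There, the waiting set is contained in \(T\) by path correspondence, and contains \(T\) by soundness and monotonicity of may-reach.

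\textbf{Points that need repair.}
\begin{itemize}
\item \emph{Race check at writes.} You check for a race abort only at reads, against earlier writes, and you dismiss writes as preserving the invariant directly. But a read followed by a conflicting write, or two conflicting writes, is detected at the write step. The downward-closure lemma handles this case. Suppose an already performed access \(e\) conflicts with the new write \(w\). If \(w \mathrel{\mathsf{hb}_E} e\), then \(w\) would have been in the prefix before \(e\), which is impossible. So \(e \mathrel{\mathsf{hb}_E} w\), and that chain is realized in the target trace.
\item \emph{Visible-memory invariant.} This invariant is false in the partially convergent configuration. A sync among a partial group publishes its participants' writes to shared memory. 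A nonparticipant then sees those writes, although they are not \(\mathsf{hb}_E\)-before its current point. The invariant holds only at the location a thread is about to access, which is all your read case uses, so restrict it to that.
\item \emph{Publication order.} Two of your steps need a lemma you do not state: ``it is the \(\mathsf{hb}\)-latest one'' in the read case, and your final-memory step. The lemma: if \(w \mathrel{\mathsf{hb}_E} w'\) for two writes to one location, then \(w\) is published before \(w'\) is issued, or else both come from one thread and are published in issue order. The same fact shows that a reader's own pending write, when one exists, is the \(\mathsf{hb}_E\)-latest. It also relies on the final flush publishing in issue order.
\end{itemize}
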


We mechanize the theorem in Rocq, covering structured loops, early exits,
primitive arguments and results, and both choices of
\(\mathsf{ord}(p)\).\ifshowappendix\ Appendix~\ref{sec:mechanized-guarantee}
gives the argument in full.\fi

\begin{proof}[Proof outline]
\renewcommand{\qedsymbol}{}
Fix a finished reference execution for the input, which condition 1 makes
memory-data-race-free. The argument matches any \textsf{SSO} execution with
delayed visibility against it, in two parts.

The first part rearranges reference steps, which is needed because \textsf{SSO}
lets threads advance in any order while the reference fixes one. An action
enabled in a state remains enabled after a different action runs, because a
thread excluded from an instance cannot become one of its participants later.
Two adjacent independent actions can then be exchanged without changing any
recorded event, where reference memory DRF rules out the case that would
otherwise block the exchange, a conflicting pair of accesses with nothing
between them to order it. Together these move any action enabled at the start
to the front, so any finite target prefix can be matched by a permutation of
the reference execution.

The second part relates delayed visibility to the reference memory, in which a
write takes effect at once. Let logical memory be the result of applying every
pending write to shared memory in issue order. We induct over target steps,
maintaining that the two executions agree on thread code and local values, that
logical memory agrees with reference memory at every address, and that all
pending writes to one address belong to a single thread. The key fact is that a
pending write cannot conflict with another thread's access before its author
participates in a memory-ordering primitive, and this follows from reference
DRF alone. Read agreement therefore follows from the reference being race-free,
and we never assume that the target is race-free. Primitive readiness,
participants, and arguments depend only on thread code and local values, so
they agree as well, and the joint step returns the same result to each
participant.

The two parts combine as follows. No target execution races, because its events
match those of a prefix of a memory-DRF reference execution, so none aborts.
The waiting rule gives each instance the participants the reference assigns,
which condition 2 places among the groups \(C\) guarantees. Publishing the
remaining pending writes at termination makes the final memories agree. Every
terminating \textsf{SSO} execution therefore agrees with the reference on the
participants of each instance, the reads of each thread, and the final memory,
which establishes the theorem.
\end{proof}

\subsection{Instantiating Configurations}
\label{sec:instantiating}

The contract does not change from one target system to another; the
configuration does. Two choices dominate an instantiation. One is per
primitive: which operations order memory among their participants, and which
require the complete warp. The other is reconvergence: which join points bring
threads back together, and therefore which divergent regions may contain a
primitive that runs with a partial group.
Table~\ref{tab:language-participation} summarizes these choices for the APIs
studied here, and Section~\ref{sec:background-scopes} describes their
memory-ordering rules.

\paragraph{SPIR-V}
Our partially convergent SPIR-V configuration requires an entry point to
declare the optional \maxreconv{} execution mode. For such
an entry point, SPIR-V requires invocations to reconverge according to the
extension's structured rules~\cite{spirv:max-reconvergence}. These rules fix
selection and loop merges as well as loop continue targets.
\texttt{OpSwitch} retains additional freedom: invocations with the same
selector remain together, while invocations with different selectors that
reach the same case may or may not be grouped together. Figure~\ref{fig:core-language}
therefore represents \texttt{switch} directly rather than translating it into
nested conditionals.

Among primitives, SPIR-V's cooperative matrix operations are the ones with
\(\mathsf{req}(p)=\mathsf{full}\): they require every invocation in the matrix
scope, which for the physical warps WarpDRF considers is the complete warp.

\paragraph{CUDA and HIP}
As Section~\ref{sec:core-language} notes, we do not model bitmasks. Our tests
pass \texttt{\_\_activemask()}, which queries the participants dynamically and
so corresponds more closely to the SPIR-V mapping. In pilot experiments we
could not find any reconvergence rule that reliably gave
unambiguous participation for primitives with partial groups. We regard this as
a serious problem with \texttt{\_\_activemask()}. Code that uses it to group a
partial warp may work today, but we did not observe any rule that
says it will keep working.

Despite these surprisingly unconstrained behaviors around
\texttt{\_\_activemask()}, one rule did hold empirically, even if we could not
find it stated in the CUDA specification. At a join point where every thread of
the warp reconverges, all reliably participate in a subsequent
primitive. This configuration is weaker than the SPIR-V one and is
therefore portable: requiring every primitive to run with the complete warp
leaves nothing for a reconvergence policy to disagree about.

SPIR-V and CUDA therefore give two configurations worth testing broadly, which
Section~\ref{sec:realizing-warpdrf} does. Section~\ref{sec:faial} then
implements the portable one, which that testing shows to be empirically valid
for CUDA, in the Faial race checker.

\section{Testing Implementations}
\label{sec:realizing-warpdrf}

WarpDRF's guarantee requires a conforming target system. We ask if a system
conforms, focusing on participation, the novel component of WarpDRF and the
property languages document the least. The contract specifies the reference
semantics, which can serve as an oracle. The oracle is efficient to compute
because WarpDRF has a single output for a given input; there is no need to
enumerate interleavings. We build this executable model in MLIR and test 16
device and backend pairs against it. We check the systems of
Table~\ref{tab:device-fuzzing} against the three configurations of
Section~\ref{sec:instantiating}. Checking every configuration establishes
whether a system conforms to any instantiation of WarpDRF, and the cases where
they disagree provide interesting insights into how systems implement warp
executions.

The conformance campaign generates kernels with non-obvious participation,
translates each to the target languages, and compares target results with the
interpreter under different participation rules. This separates expected
behavior from the compiler and hardware paths being tested.

\subsection{Conformance Tests and Oracle}
\label{sec:test-pipeline}

To test conformance, we implement the language of
Section~\ref{sec:core-language} as the WarpDRF MLIR dialect. Its operations are
the memory accesses, warp primitives, and structured control flow of
Figure~\ref{fig:core-language}. This dialect could be used as the foundation
for more analysis, but for now we use it for conformance testing.

To provide an oracle, we implement an interpreter for the WarpDRF dialect.
Computing the expected result follows the schedule of the model itself: execute
each thread in turn, switching at a warp primitive and releasing it once all
its participants arrive. The interpreter creates a dynamic execution graph
that tracks dynamic blocks, similar to SIMT-Step~\cite{chen2026simtstep}, and
updates each block's participation list as threads branch.

To provide concrete tests, we implement raisers to the different GPU languages
of Table~\ref{tab:device-fuzzing}. Control flow and expressions map naturally to
each language. The
GLSL path emits a shader embedded in an Amber script, which Amber compiles to
SPIR-V and executes through Vulkan. We note that HLSL does not have a mapping
to synchronizing warp
primitives; however, our tests only deal with participation, so it does not
impact this investigation.

\paragraph{A WarpDRF conformance test}
A test is a WarpDRF dialect module and a launch configuration giving the number
of threads and the warp width. It initializes arrays that contain values for the
test. For each target, we launch exactly one complete warp: if the target warp
width is $W$, the launch contains $W$ threads. As in SIMT-Step, we obtain
$W$ from the target rather than fixing one width for every
device~\cite{chen2026simtstep}.

Participation is observed through \texttt{wave\_count\_bits}, a warp primitive
of the WarpDRF language that returns the number of threads participating in it;
the artifact gives its mapping to each target language. Every thread calls it
and stores the result at an array index no other thread writes. The resulting
array of counts is what we compare: the interpreter produces one count per
thread, and a target system that groups threads differently produces different
counts. Such a mismatch is a difference in participation, the second condition
of WarpDRF, and thus would violate WarpDRF under the instantiated configuration.

\subsection{Configurations and Test Generation}
\label{sec:configurations-suites}

We instantiate three configurations. Two are derived in
Section~\ref{sec:instantiating}. The third, FMR, is stronger than any language
guarantees, and we include it to observe whether modern deployments fail it.
Each configuration names the dynamic blocks in which a primitive has a
guaranteed participant set.
\begin{itemize}
  \item \textbf{FMR} (full maximal reconvergence). Every dynamic block, so a
    primitive inside a divergent region runs with exactly the threads that
    branched into it.
  \item \textbf{SMR} (SPIR-V maximal reconvergence). Every dynamic block except
    at \texttt{switch}, where invocations that reach a common case through
    different selectors need not be grouped.
  \item \textbf{CUDA}. Only the dynamic blocks executed by the complete warp.
\end{itemize}
FMR is strictly stronger than SMR, which is strictly stronger than CUDA, so CUDA
represents the most portable semantics, but also the least flexible, allowing no
calls to non-uniform warp primitives.

The oracle interpreter implements FMR, the strongest configuration. However,
conformance does not test if participation is stronger than required, only if it
is not weaker. Thus, we do not need to implement ambiguous warp primitive
semantics. Instead, conformance testing can be instantiated over FMR by
constraining where warp primitives are allowed. For example, FMR allows warp
primitives anywhere in the program, as they would be unambiguous. SMR allows
them in all cases except for join nodes in switch statements. CUDA allows them
only in nodes with full warp participation. Furthermore, because our
configurations are totally ordered, the conformance tests are simply subsets of
each other.

A seeded generator produces a set of ``fuzzed'' conformance tests. It emits
conditionals, loops, switch statements, function calls, and early exits. Some
conditional expressions are intentionally statically obvious, so that the
compiler would attempt to flatten loops. Other conditionals cannot be known by
the compiler, since they depend on input arrays that are set to specific values.
Branches are set to be uniform or divergent at random. The generator at all
times tracks the dynamic execution graph and places warp primitives (along with
their logging functionality described above) at random places allowed by
configuration \(C\). The available parameters for the fuzzer are shown in
Table~\ref{tab:fuzzer-params}, along with the values we used in our study.

\begin{table}[t]
  \centering
  \scriptsize
  \setlength{\tabcolsep}{4pt}
  \renewcommand{\arraystretch}{1.05}
  \caption{Generator parameters used in the conformance study.}
  \label{tab:fuzzer-params}
  \begin{tabular}{@{}p{0.30\columnwidth}p{0.64\columnwidth}@{}}
    \toprule
    Parameter group & Setting \\
    \midrule
    Test campaign
      & 10,000 tests per configuration; seeds 0--9,999 \\
    Launch
      & One complete warp, $W\mathbin{\times}1\mathbin{\times}1$;
        $W\in\{16,32,64\}$ is queried from the target \\
    Control-flow bounds
      & Nesting depth $\leq3$; loop trips 1--4; switch cases 2--4 \\
    Helper bounds
      & Control nesting $\leq3$; call nesting $\leq2$;
        at least 3 control operations \\
    Generation probabilities
      & 0.3 each for break/continue, post-switch primitive,
        divergent helper call, switch-before-helper, and nested-helper loop \\
    Conditions
      & Thread-ID expressions or uniform-random input arrays \\
    \bottomrule
  \end{tabular}
\end{table}

\subsection{Conformance Across 16 Systems}
\label{sec:cross-platform}

Table~\ref{tab:device-fuzzing} shows the devices we test and the backends we
test them under, as several devices support multiple languages and thus
different compilers.

We start with conformance to CUDA, the weakest model. In our study, all GPUs and
their corresponding stacks honor this instantiation of WarpDRF, thus providing a
real portable programming interface. However, this comes at the cost of
functionality, as non-uniform warp primitives are not allowed at all.

Conformance implies three properties of warp execution that we could not
find stated in the CUDA documentation:
\begin{itemize}
  \item warps begin converged;
  \item there is no spurious divergence, so a path through the control-flow
    graph on which the warp is fully converged remains fully converged; and
  \item full reconvergence holds, so a warp that diverges is converged at
    the join where all of its threads meet.
\end{itemize}
Each is an exceptionally weak guarantee, but without them we cannot claim this
variant of WarpDRF is vendor supported in any way, even through informal prose.
It remains an empirically tested model.

Moving up to SMR, the two NVIDIA CUDA pairs fall out, which is not surprising.
The interesting point is that we expected a possible model between CUDA and SMR,
but in examining the failures we could not determine any reliable pattern that
CUDA devices support. Because our non-uniform warp primitives are implemented
using \texttt{\_\_activemask()} in CUDA, this means that
\texttt{\_\_activemask()} cannot be relied on to provide a deterministic set of
threads in any divergent code, an assumption that kernels might make. It is
interesting to note that Metal and HLSL also support the SMR model. Both define
a collective's result over the threads active when the operation executes, but
neither states a general rule for when threads from divergent paths must rejoin,
so their agreement with SMR is an empirical property of the tested
implementations.

It is typical for systems to implement stronger models than they document, and
ten of the sixteen provide FMR. The six that do not are every NVIDIA pair, the
PowerVR device, and the Adreno under GLSL. For four of them the only failures
are in switch statements; the two CUDA pairs already fell out at SMR.
Programmers targeting SPIR-V must therefore watch out for this construct: it is
documented, but it may still be surprising. Metal provides FMR on both devices,
and HLSL provides it everywhere except the NVIDIA laptop GPU. The stronger model
has the fewest surprises, but it also leaves an implementation the least
freedom, and investigating the performance implications of that would be
interesting future work. The mechanism is visible in the tests: when threads
with different selector values reach the same case through fallthrough, the
reference semantics groups them at that common case, while on the RTX 4070 GLSL
path the observed operation keeps them in separate groups.

\begin{table*}[t]
  \centering
  \footnotesize
  \setlength{\tabcolsep}{5pt}
  \renewcommand{\arraystretch}{0.92}
  \caption{Mismatches out of 10,000 tests for each configuration. A column
  places warp primitives only where that configuration guarantees a participant
  set, so a system conforms to a configuration when that cell and every cell to
  its left are green. Green cells contain no mismatches, while yellow cells
  contain one or more. The \texttt{-O0} row is an additional compiler
  configuration; the other rows cover 16 device/backend pairs.}
  \label{tab:device-fuzzing}
  \begin{tabular}{@{}llrrr@{}}
    \toprule
    Device & Backend & CUDA & SMR & FMR \\
    \midrule
    Intel Arc A770 & GLSL/Amber & \resultzero{0/10,000} & \resultzero{0/10,000} & \resultzero{0/10,000} \\
    Intel UHD 770 & GLSL/Amber & \resultzero{0/10,000} & \resultzero{0/10,000} & \resultzero{0/10,000} \\
    Mali G710 & GLSL/Amber & \resultzero{0/10,000} & \resultzero{0/10,000} & \resultzero{0/10,000} \\
    NVIDIA RTX 4070 & GLSL/Amber & \resultzero{0/10,000} & \resultzero{0/10,000} & \resultpartial{4,810/10,000} \\
    NVIDIA RTX 4070 & CUDA & \resultzero{0/10,000} & \resultpartial{747/10,000} & \resultpartial{1,431/10,000} \\
    NVIDIA RTX 4070 & CUDA, \texttt{-Xcicc=-O0} & \resultzero{0/10,000} & \resultzero{0/10,000} & \resultpartial{470/10,000} \\
    AMD RX 7900 XT & GLSL/Amber & \resultzero{0/10,000} & \resultzero{0/10,000} & \resultzero{0/10,000} \\
    AMD RX 7900 XT & HIP & \resultzero{0/10,000} & \resultzero{0/10,000} & \resultzero{0/10,000} \\
    Jetson T5000 (Thor) & CUDA & \resultzero{0/10,000} & \resultpartial{747/10,000} & \resultpartial{1,438/10,000} \\
    PowerVR DXT-48-1536 & GLSL/Amber & \resultzero{0/10,000} & \resultzero{0/10,000} & \resultpartial{24/10,000} \\
    Apple M4 & Metal & \resultzero{0/10,000} & \resultzero{0/10,000} & \resultzero{0/10,000} \\
    Apple M1 & Metal & \resultzero{0/10,000} & \resultzero{0/10,000} & \resultzero{0/10,000} \\
    Samsung Xclipse 940 & GLSL/Amber & \resultzero{0/10,000} & \resultzero{0/10,000} & \resultzero{0/10,000} \\
    NVIDIA RTX 4070 Laptop & HLSL/DirectX & \resultzero{0/10,000} & \resultzero{0/10,000} & \resultpartial{870/10,000} \\
    Intel Raptor Lake-S Mobile & HLSL/DirectX & \resultzero{0/10,000} & \resultzero{0/10,000} & \resultzero{0/10,000} \\
    Qualcomm Adreno X1-85 & GLSL/Amber & \resultzero{0/10,000} & \resultzero{0/10,000} & \resultpartial{1,370/10,000} \\
    Qualcomm Adreno X1-85 & HLSL/DirectX & \resultzero{0/10,000} & \resultzero{0/10,000} & \resultzero{0/10,000} \\
    \bottomrule
  \end{tabular}
\end{table*}

\subsection{WarpDRF: All or Nothing}
\label{sec:cuda-sensitivity}

The CUDA configuration appears strong, and one might be tempted to assume that
its components hold on their own: that a non-uniform primitive might not behave
well, but that it would not affect uniform primitives. This is
similar to assuming that benign data races will not impact data-race-free
accesses, which prior work has shown to be false~\cite{boehm2011benign}.

We show a similar situation here. A non-uniform primitive causes a fully uniform
primitive to have its participants split; the example is given in
Figure~\ref{fig:participant-question}. To make the problem even more nefarious,
we find that our observations change under different compiler configurations.

The kernel launches one warp of 32 threads and
runs nested loops whose trip counts depend on the thread: threads 0--15 run two
iterations of each, threads 16--31 run four. All 32 reach the primitive at the
four shared coordinates $(0,0)$, $(0,1)$, $(1,0)$ and $(1,1)$, where the
reference semantics gives 32 participants. On a Jetson AGX Thor the
\texttt{compute\_75} build returns 16 at all four, so 128 of the 320 counts the
kernel records disagree with the reference. The \texttt{compute\_110} build
returns 32 everywhere, as does a \texttt{compute\_75} variant that makes the
inner induction variable \texttt{volatile}.

The failure lands on full-warp coordinates. At every other coordinate only
threads 16--31 arrive, and those are interleaved with the shared ones, so the
kernel is outside the CUDA configuration and this is not a conformance failure.
It is worse: calling a primitive non-uniformly in the kernel changed
the result where the whole warp was present.

Thus WarpDRF should seek to get buy-in from implementors, as they are
responsible for maintaining this behavior across both hardware and compilers. We
believe our CUDA conditions are permissive, have conformance across a large
fuzzed data set, and could hopefully be incorporated into the specification.
Until then, our WarpDRF contract remains an empirical model, but we do not see
any more permissive semantics to provide.

\begin{figure}[t]
\centering
\coderule
\begin{lstlisting}[style=gpufigure,numbers=none,xleftmargin=0.5em,
                     aboveskip=3pt,belowskip=3pt]
// Launch: one 32-thread warp.
int bound = tid < 16 ? 2 : 4;
for (int outer = 0; outer < bound; ++outer)
  for (int inner = 0; inner < bound; ++inner)
    result[outer][inner][tid] = warp_count(true);
\end{lstlisting}
\begin{tikzpicture}[
  result/.style={
    draw,
    rectangle,
    align=center,
    text width=3.0cm,
    minimum height=17mm,
    inner sep=2pt,
    font=\scriptsize
  },
  flow/.style={
    -{Latex[length=1.4mm, width=1.1mm]},
    line width=0.4pt
  }
]
  \node[font=\scriptsize] (source) at (0,0) {same CUDA source};
  \node[result,fill=resultpartial] (sm75) at (-1.75,-1.35) {
    \texttt{compute\_75}\\
    6 PTX ballot sites\\
    8 SASS sync pairs\\
    \textbf{128 mismatches}};
  \node[result,fill=resultzero] (sm110) at (1.75,-1.35) {
    \texttt{compute\_110}\\
    1 PTX ballot site\\
    1 SASS sync pair\\
    \textbf{0 mismatches}};

  \draw[flow] (source.south west) -- (sm75.north);
  \draw[flow] (source.south east) -- (sm110.north);
\end{tikzpicture}
\coderule
\caption{A CUDA participant-count test launched with one 32-thread warp. Compiling the same
source for two virtual targets produces different PTX control-flow bodies and
different participant counts on Thor. Each thread writes a distinct location, so
the mismatches do not arise from conflicting ordinary memory accesses.}
\Description{A CUDA-like nested loop is compiled for compute 75 and compute
110. The compute 75 build has six PTX ballot sites, eight SASS synchronization
pairs, and 128 output mismatches. The compute 110 build has one PTX ballot site,
one SASS synchronization pair, and no mismatch.}
\label{fig:participant-question}
\end{figure}

Compiling the same source for \texttt{compute\_75} produces six PTX active-mask
and ballot sites and eight SASS \texttt{BSSY}--\texttt{BSYNC} pairs, whereas
\texttt{compute\_110} produces one of each. We executed each PTX body through
three compilation paths: direct Driver API loading, driver-link JIT compilation,
and standalone \texttt{ptxas}. Across 100 repetitions, every path preserves the
result of its input PTX: the \texttt{compute\_75} body produces 128 mismatches,
while the \texttt{compute\_110} body produces none. Changing only the PTX
\texttt{.target} declaration does not change either result. The resulting
machine code is not identical across all paths, so this experiment does not
identify one particular machine instruction as the cause. It does show that the
observed participant grouping follows the PTX control-flow structure selected
by the CUDA compiler, rather than the target declaration or the subsequent
compilation path.

\section{Checking Kernels Against the Contract}

Section~\ref{sec:realizing-warpdrf} found a stable full-warp boundary across the tested
implementations. We now ask whether deployed kernels follow that rule.
We produce the first static checker for an empirically validated warp
programming model by extending Faial for the portable CUDA configuration.
WarpDRF-Faial checks memory ordering and full-warp participation separately. We
apply it to the performance-critical CUDA kernels of llama.cpp, a widely used
open-source LLM inference engine. Most analyzed warp-using kernels satisfy
both conditions after source and caller validation; the remaining kernels
reduce to four source-level patterns.

\subsection{WarpDRF-Faial: a WarpDRF checker for CUDA}
\label{sec:faial}
\label{sec:extended-faial-protocols}

WarpDRF-Faial checks CUDA kernels for the two conditions of
Section~\ref{sec:abstract-executions}, reporting each condition separately
and reporting a kernel as WarpDRF only when both conditions succeed.
Our analysis builds on the GPU verification framework
Faial~\cite{faial:fmsd23,liew2024faialaa}.

\paragraph{Verifying Condition 1 (memory DRF)}
Our analysis assigns a \emph{block phase} and a \emph{warp phase} to each
memory access.
Two instructions alter the phases: a block-level barrier synchronization
(\texttt{\_\_syncthreads}) increments the block phase for every thread in the
block, and a warp barrier synchronization (\texttt{\_\_syncwarp}) increments
warp phase for the threads in that warp.
To check the DRF condition we apply the two-thread reduction
technique~\cite{pug:fse10}, where a conflict happens when the block phases
are equal and either the threads are in different warps, or the two accesses
are in the same warp and have equal warp phases.
\textbf{Alarm:} the counterexample (a program state) of the two accesses that
may race.

\paragraph{Verifying Condition 2 (unambiguous participation)}
The analysis checks that:
1) each warp primitive has a full-warp mask,
2) each branch guarding a warp primitive has a warp-uniform condition, and
3) each loop containing a warp primitive has warp-uniform bounds.
In short, the code that controls warp primitives must be warp-uniform, while
the remaining code can be warp-divergent.
The thread-dependent load in \texttt{block\_reduce} does not
control its subsequent warp reduction, so the checker accepts it.
\textbf{Alarm:} the warp primitive that may execute with only part of its
warp.

\subsection{WarpDRF-Faial on llama.cpp}
\label{sec:llama-eval}

\paragraph{Data selection}
At commit \texttt{42fc243}, llama.cpp contains 479 CUDA kernels across 64 source
files.
In this experiment we only consider the 95 kernels (out of 479) that use at
least one warp primitive.
In this experiment, we run WarpDRF-Faial once per kernel with a timeout of 60
seconds.
WarpDRF-Faial may report several alarms for one kernel.
We manually validate each alarm reported.
To validate the alarms, we group them by source location and reported cause,
inspect a representative specialization, and inspect caller constraints when
they can exclude the reported path.

\begin{table}[t]
  \centering
  \footnotesize
  \setlength{\tabcolsep}{5pt}
  \caption{Kernels checked per condition, with Condition~1 (memory) in rows
  and Condition~2 (participation) in columns. Pass means no alarms were
  found; Fail means one or more.}
  \label{tab:llama-conditions}
  \begin{tabular}{@{}lrrr@{}}
    \toprule
    \diagbox[width=13em,height=2em]{Memory}{Participation} & Pass & Fail & Total \\
    \midrule
    Pass & \textbf{65 (68.4\%)} & 0 (0.0\%) & 65 (68.4\%) \\
    Fail    & 29 (30.5\%) & 1 (1.1\%) & 30 (31.6\%) \\
    \midrule
    Total    & 94 (98.9\%) & 1 (1.1\%) & 95 (100\%) \\
    \bottomrule
  \end{tabular}
\end{table}

\paragraph{Findings}
Table~\ref{tab:llama-conditions} reports which kernels pass Conditions~1 and~2.
We found that 65 of the 95 kernels are automatically deemed WarpDRF.
We manually validated the remaining 30 kernels and found \textbf{18 buggy
kernels}; the other 12 have only false-positive or caller-discharged alarms.
This means that 77 kernels (65 + 12, or 81.1\%) satisfy both conditions.
The 18 buggy kernels group as MoE selection (12), MMVF partial accumulation
(4), ID compaction (1), and shared reduction-buffer reuse (1)\ifshowappendix;
Appendix~\ref{sec:alarm-details} describes each alarm in detail\fi.

\paragraph{Alarm validation}
Across all kernels, WarpDRF-Faial found 67 memory alarms and 3 participation
alarms, summarized in Table~\ref{tab:alarm-families}.
We classify an alarm as \emph{confirmed} when the defect can be reproduced,
\emph{input-conditional} when the alarm requires values such as NaNs or
duplicate identifiers, \emph{caller-discharged} when current callers exclude
it, and \emph{false positive} otherwise.
The artifact provides the mapping from alarms to
locations and cause categories.

\begin{table}[t]
  \centering
  \scriptsize
  \setlength{\tabcolsep}{3pt}
  \caption{llama.cpp alarms grouped by source-level cause.}
  \label{tab:alarm-families}
\begin{tabular}{@{}p{0.36\columnwidth}p{0.17\columnwidth}rp{0.24\columnwidth}@{}}
  \toprule
  Cause category & Condition & Alarms & Outcome \\
  \midrule
  Vector-attention indexing & Memory & 10 & False positive \\
  Tiled-attention ownership & Memory & 2 & False positive \\
  ID output indexing & Memory & 3 & False positive \\
  Vector-attention participation & Participation & 3 & False positive \\
  \midrule
  MXFP4 scatter ownership & Memory & 2 & Caller-discharged \\
  Q8 scatter ownership & Memory & 27 & Caller-discharged \\
  SSM output layout & Memory & 2 & Caller-discharged \\
  MoE post-bias NaN & Memory & 12 & Input-conditional \\
  Duplicate-ID collision & Memory & 1 & Input-conditional \\
  MMVF shared-slot writes & Memory & 6 & Confirmed \\
  Shared reduction-buffer reuse & Memory & 1 & Confirmed \\
  ID shared-memory handoff & Memory & 1 & Confirmed \\
  \midrule \textbf{Total} & & \textbf{70} & \\
  \bottomrule
\end{tabular}

\end{table}

\emph{False positives.}
The 18 false-positive alarms have four causes. Ten vector-attention memory
alarms and two tiled-attention memory alarms arise because the checker cannot
prove that the kernels' template-dependent index calculations assign distinct
buffer slices or output elements to different threads. Inspecting those index
calculations shows that the reported accesses cannot refer to the same
location. Three ID-compaction memory alarms arise because the checker treats a
full-warp reduction result as potentially different in each thread. In the
kernel, every thread receives the same reduction result and then uses its thread
identifier to access a different output element. The remaining three alarms
concern participation: the checker cannot prove that block-uniform bounds and
warp-aligned thread ranges select complete warps. The supported template and
launch configurations make those bounds multiples of the warp width, so each
affected warp executes the primitive with either all of its threads or none of
them.

\emph{Caller-discharged alarms.}
The 31 caller-discharged alarms arise because the checker analyzes a kernel
under a weaker input contract than its current callers provide. For the MXFP4
and Q8 scatter kernels, it therefore allows two thread groups to map to the
same compact output row and report conflicting writes. The callers construct
an inverse map in which each compact row has a unique owner; within that row,
thread guards assign each packed field to one thread. These properties make the
reported destinations distinct. For the SSM kernel, the checker similarly
allows the result and state pointers to overlap. The caller's tensor layout
places the state region after the result region, so the reported accesses lie
in separate regions. These conclusions apply to the current callers rather
than to arbitrary calls satisfying only the kernel's local parameter types.

\emph{Input-conditional alarms.}
A post-bias NaN can make MoE-selection threads choose different
experts and write the same output. Duplicate IDs can make two threads
write different values to one shared entry in the ID-compaction kernel.

\emph{Confirmed alarms.}
In MMVF, every thread of a warp writes one shared slot without ordering. The
shared reduction-buffer alarm arises when a later reduction overwrites the
buffer while another warp is still reading it. The ID shared-memory handoff is
unordered as its intervening shuffle coordinates register values but
does not order shared memory.
We reported this bug in
\href{https://github.com/ggml-org/llama.cpp/issues/28376}{issue~\#28376}
and proposed a warp memory barrier before the shared-array reads.
Merged upstream \href{https://github.com/ggml-org/llama.cpp/pull/28475}{PR~\#28475}
subsequently fixed the handoff by adding that barrier.

\section{Related Work}

\paragraph{Scoped memory models and data-race freedom}
Weak ordering established the DRF style of reasoning: programmers use
synchronization to order conflicting accesses and then reason in a simpler
reference model~\cite{adve1990weak}. Heterogeneous race-free models adapt that
argument to hierarchical machines by attaching scopes to synchronization and
memory operations~\cite{hower2014hrf,sinclair2017rats}, and formal work has
characterized the guarantees that heterogeneous systems, PTX, and Vulkan
actually
provide~\cite{alglavegpuconcurrency,dan-ptx-mem-model,haining2025gpuconsis}.
All of this begins with synchronization operations whose participants are
already fixed. WarpDRF keeps the DRF argument but not that premise: a warp
primitive orders only the threads that execute it together, so its participants
must be established before its memory ordering can be used.

\paragraph{GPU verification and race detection}
Earlier work compares interleaving and lockstep GPU semantics and establishes
their agreement for race-free
programs~\cite{collingbourne2013interleaving,kojima2017hoare}. GPUVerify was
extended to reason about implicit, coarse-grained synchronization within fixed
warps and about atomics~\cite{bardsley2014warps}. Dynamic tools including
BARRACUDA, CURD, and HiRace detect races in GPU executions, while Simulee
targets CUDA synchronization bugs through memory-access
modeling~\cite{eizenberg2017barracuda,peng2018curd,jacobson2024hirace,
wu2020simulee}. These approaches do not define the participant group of an
explicit warp primitive under divergent control flow; WarpDRF makes that group
part of the contract.

\paragraph{Testing implementations against specifications}
A specification is only useful if implementations follow it, and a line of work
establishes that empirically. Stress testing has exposed weak and undocumented
memory behavior in deployed GPU applications and
devices~\cite{tyler2016weakmem,jake2020memcons}, mutating litmus tests measures
how well a suite discriminates between candidate
models~\cite{reese-memorycons}, and large campaigns survey conformance across
many devices at once~\cite{reesegpuhabor}. Related work has found shader
compiler defects with generated programs~\cite{ally-fuzz} and characterized
forward-progress guarantees that no specification
states~\cite{tyler-forwardprogress}. These campaigns target memory consistency
and progress, where a reference model already fixes the expected result.
Section~\ref{sec:realizing-warpdrf} applies the method to participation, which
first required a contract to supply one.

\paragraph{Warp execution and participation}
Warp-level cooperation is a standard performance idiom, with libraries
providing warp-wide reductions, scans, and data exchange~\cite{cub} and
irregular kernels built directly on warp behavior~\cite{merrill2012graph}. What
those operations group is less settled. NVIDIA's independent thread scheduling
weakens the assumption that a warp executes in
lockstep~\cite{nvidia_volta_tuning_2025}, and hardware and compiler work treats
grouping as an optimization variable, regrouping threads after divergence or
delaying execution to increase
convergence~\cite{fung-dynamic-warp,ElTantawy2018warp,oliver2020}. This line
studies grouping as a performance question.

A separate line makes participation precise. SIMT-Step determines the threads
that execute a subgroup operation together from dynamic basic
blocks~\cite{chen2026simtstep}, SPIR-V control-flow work and maximal
reconvergence expose the structured joins such reasoning
needs~\cite{vaileios-controlflow,spirv:max-reconvergence}, and uniformity and
barrier-divergence analyses identify where a collective may execute
safely~\cite{kamil2006concurrency,betts2012gpuverify,leejones2026uniformity}.
Verification tools cover fixed-warp synchronization,
producer-consumer patterns, and source-level correctness
requirements~\cite{bardsley2014warps,sharma2015weft,martinez2026kuiper}. These
describe how participants form. WarpDRF makes participation a condition of a
data-race-freedom guarantee, then checks whether implementations and deployed
kernels satisfy it.

\section{Conclusion}

This paper introduced WarpDRF, a data-race-freedom contract for GPU warp
programming. WarpDRF combines scoped memory ordering with predictable
participation in each warp primitive. It
provides a portable model for full-warp programs and a separate model for
languages that define partial-warp reconvergence. We implemented WarpDRF
through an MLIR interpreter, a cross-platform conformance suite, and a
warp-aware extension to Faial. Across the tested systems, full-warp programs
behaved consistently, whereas partial participation varied across languages,
compilers, and targets. Applying WarpDRF-Faial to llama.cpp exposed
complementary memory and participation problems. Together, these results
provide a rigorous foundation for specifying, reasoning about, and checking
warp programs across GPU languages and implementations.

\begin{acks}
We disclose the use of generative AI in this work. AI aided in tool
development, evaluation, and data analysis. Humans rigorously evaluated the
integrity of all artifacts reported here. Humans organized the presentation of
the work, and agents helped with drafting and polishing. Text polishing used
the Copilot IDE set to Claude Opus 5, and the authors used other coding
assistants running GPT-5.6 and Claude Opus 5.
\end{acks}

\bibliographystyle{ACM-Reference-Format}
\bibliography{software}

\ifshowappendix
\appendix

\section{Validated Alarms}
\label{sec:alarm-details}

\begin{figure}[t]
\small
\begin{verbatim}
tmp = block_reduce(tmp, s_sum); // first
...
tmp = block_reduce(tmp, s_sum); // reuse

// Inside block_reduce:
val = warp_reduce(val);
if (lane_id == 0)
  shared_vals[warp_id] = val;   // write
__syncthreads();
if (lane_id < block_size/WARP_SIZE)
  val = shared_vals[lane_id];   // read
return warp_reduce(val);
\end{verbatim}
\caption{Shared-buffer reuse in llama.cpp's group-normalization kernel. The
barrier orders writes before reads within one call, but does not order the
first call's reads before the second call's writes.}
\Description{A shortened llama.cpp group-normalization excerpt shows two block
reductions reusing one shared buffer. Inside the reduction, each warp writes a
partial result, a block barrier occurs, and the partial results are read.}
\label{fig:reused-reduction}
\end{figure}

\emph{Shared reduction-buffer reuse.}
Figure~\ref{fig:reused-reduction} shows one confirmed alarm category. The
1024-thread \texttt{group\_norm\_f32} configuration invokes
\texttt{block\_reduce} twice on the same shared buffer. Its block barrier
orders each call's writes before its reads, but does not order the first call's
reads before the second call's writes. Thus one warp can overwrite an entry
while another is still reading it. After the analyzed revision, merged upstream
\href{https://github.com/ggml-org/llama.cpp/pull/26385}{PR~\#26385} documented
the same reuse race and fixed \texttt{group\_norm\_f32} by placing its second
reduction in a separate shared-memory region.

\begin{figure}[t]
\small
\textbf{(a) MMVF partial accumulation}
\begin{verbatim}
sumf[j] =
  warp_reduce_sum<warp_size>(sumf[j]);

if (block_size > warp_size) {
  buf_iw[tid/warp_size] = sumf[j];
  __syncthreads();
}
\end{verbatim}
\textbf{(b) Q8 quantization participation}
\begin{verbatim}
// quantize_q8_1 kernel
if (i0 >= ne0) return;
...
amax =
  warp_reduce_max<QK8_1>(amax);
sum =
  warp_reduce_sum<QK8_1>(sum);

// enclosing wrapper
GGML_ASSERT(ne0 % QK8_1 == 0);
\end{verbatim}
\caption{Shortened llama.cpp excerpts illustrating (a) a confirmed
shared-memory race and (b) a participation alarm ruled out by the caller's
input constraints.
In (a), every thread in a warp has the same \texttt{tid/warp\_size} and writes
the same shared slot. In (b), the kernel-local early return precedes warp
collectives, but the wrapper requires a full-warp-aligned extent.}
\Description{Two code excerpts show a shared-memory write indexed by warp in
the MMVF kernel and a thread-dependent return before warp reductions in the Q8
kernel together with its caller's divisibility assertion.}
\label{fig:validated-alarm-code}
\end{figure}

\emph{MoE selection.}
Twelve \texttt{topk\_moe\_cuda} variants pass participation but report
an \texttt{ids[k]} write/write conflict. The kernel sanitizes its input weights
before adding an optional bias, but does not sanitize the resulting selection
scores. If one post-bias score is \texttt{NaN}, both greater-than and equality
comparisons with that score are false. The XOR argmax can then leave different
threads with different selected experts, allowing two writer guards to hold for
the same \texttt{ids[k]}. A thread-level simulator reproduces this race for all
12 specializations. For 12,000 randomized non-NaN score vectors, every thread
instead agrees on one expert and exactly one thread writes. Sanitizing after the
bias addition, or requiring non-NaN post-bias scores, excludes this case.

\emph{MMVF partial accumulation.}
Four \texttt{mul\_mat\_vec\_f} specializations report the shared write at
\texttt{mmvf.cu:317}, shown in Figure~\ref{fig:validated-alarm-code}(a). For
blocks wider than one warp, each thread executes
the shared assignment after the warp reduction. All 32 threads of one warp index
the same \texttt{buf\_iw} slot and therefore perform unordered writes to one
location. The reduction gives them equal values, but equal-value writes are
still a data race. We have not observed output corruption from this equal-value
race, but the source remains outside CUDA's race-free semantics. Guarding the
assignment so that only the warp leader writes preserves the subsequent
block-barrier handoff.

\emph{ID compaction.}
In \texttt{mm\_ids\_helper}, one instantiation produces five memory alarms,
but they do not
represent five defects. If a token repeats an expert ID, for example
\texttt{[0,0]}, two threads write different values to the same shared
\texttt{store} entry. This is a concrete write/write race under the kernel's
low-level input contract; pairwise-unique expert IDs exclude it. A simulator
finds no such collision in 18,144 randomized unique-ID chunks. A separate
shared write/read alarm remains under uniqueness: the intervening shuffle
coordinates register values but does not order shared memory, so the handoff
requires a warp memory barrier or a non-conflict proof.
We reported this handoff in
\href{https://github.com/ggml-org/llama.cpp/issues/28376}{issue~\#28376}.
Merged upstream \href{https://github.com/ggml-org/llama.cpp/pull/28475}{PR~\#28475}
fixes it by adding \texttt{ggml\_cuda\_syncwarp()} after the reduction and
before the loop that reads \texttt{store}. This fixes the write/read handoff,
not the separate duplicate-ID collision.
Original Faial reports
other races in this kernel, but not this write/read pair; WarpDRF-Faial therefore
identifies an additional source-level race, not an additional kernel classified
as racy. The remaining three output write/write alarms are false positives.
They arise because the checker does not yet record that a full-warp reduction
returns the same value to every thread; with that equality, distinct thread IDs
make the output indices disjoint.
\emph{Q8 quantization participation.}
Without caller facts, the regular Q8 path illustrated in
Figure~\ref{fig:validated-alarm-code}(b) and six MMQ Q8 specializations report
warp collectives after a thread-dependent bounds return. We import the wrappers'
divisibility requirements into the analysis. The regular path requires
\texttt{ne0} to be divisible by 32; its current caller additionally pads to a
multiple of 512 before launching 256-thread blocks. The MMQ path requires
divisibility by \texttt{QK8\_1\_MMQ}=128 and assigns four values per thread, so
its active-thread count is also a multiple of 32. With these facts, all seven
instances pass the participation check.

All 95 warp-relevant analyses complete with a median of 1.04 seconds, a 95th
percentile of 5.53 seconds, and a maximum of 16.01 seconds. Thus the checker
separates memory and participation alarms at practical per-kernel cost.

\section{Mechanized Guarantee}
\label{sec:mechanized-guarantee}
\renewcommand{\qedsymbol}{}

This appendix gives the mathematical argument underlying our Rocq proofs of
the WarpDRF guarantee for kernels with one warp under SSO with delayed
visibility. The proof first establishes how reference steps can be reordered,
then matches the target execution to those steps. Finally, we give a Spec
execution, with the same delayed visibility, that races on a program whose
reference execution satisfies both WarpDRF conditions.
These results concern the execution models defined below, not every target
covered by the statement of Theorem~\ref{thm:warpdrf-guarantee}.

\subsection{States, Participants, and Ordering}

\paragraph{Programs and instances}
Fix a program, its input, and a warp of threads numbered from 0 to \(W-1\).
The supported language contains ordinary reads and writes, thread-local
computations, conditionals, loops, early exits, and warp primitives. Scalar
computations and primitive results are deterministic: a primitive's result
for each thread is a function of its participants and their arguments, listed
in thread order. Each primitive declares whether it orders memory and whether
it requires the complete warp.

Each warp primitive in the source has a unique identifier, shared by all
threads executing that source location. An instance \(i\) consists of this
identifier and the iteration counts of its enclosing loops. Advancing a
loop changes its iteration count, so a later iteration cannot execute the
same instance again. Leaving a loop removes that loop's count; threads leaving
in different iterations can therefore meet at an instance outside the loop.
With this convention, each instance executes at most once.

\paragraph{States and steps}
A reference state contains memory and the remaining code of each thread,
including values already bound to local variables and the current loop iterations.
In the reference, a write updates memory immediately and a read returns the
current value at its address.
An action either advances one thread or executes a ready primitive instance
jointly with its participants. Write
\[
  s \xrightarrow{a/e} s'
\]
when action \(a\) takes state \(s\) to \(s'\), recording event \(e\).
A memory event records the thread, address, access mode, and value. A
primitive event records its instance and participant group, including when
the primitive does not order memory. A local computation, branch, or arrival
at a primitive records no event. An action is \emph{enabled} when such a step
exists. For a fixed action and state, the step is deterministic.

Two states are equivalent, written \(s \simeq s'\), when their remaining
thread code is identical and their memories contain the same value at every
address. Equivalent states enable the same actions; taking the same action
produces the same event and equivalent next states. This follows directly
because expressions inspect the same local values and memory contents. An
execution is finished when every thread has terminated, at which point no
action is enabled.

\paragraph{The SSO waiting rule}
For each instance, SSO distinguishes threads already waiting there from
threads whose remaining code may still reach it. This latter test examines
both alternatives of an unresolved branch, as well as the relevant current
or future loop iteration. It does not predict the result of a future read.
An instance is ready only when its waiting group is nonempty and every thread
that may still reach it is already waiting there. Its arguments are evaluated
on arrival, and its joint step resumes exactly those waiting threads.
This is the SSO rule that a primitive waits for all possible participants to
decide whether to join~\cite{chen2026simtstep}.

The reference scheduler first tries to advance a thread, selecting the enabled
thread with the smallest identifier. If none can advance, it tries a ready
primitive in thread order. The target follows the SSO waiting rule but may
choose any enabled action, with writes governed by the delayed-visibility
rules below. For the reordering lemmas, we allow the same scheduling freedom
while retaining the reference transition rules.

\paragraph{Happens-before}
An execution's trace is its sequence of recorded events. For memory ordering,
a memory event belongs to its issuing thread, and a memory-ordering primitive
event belongs to every participant. A primitive that does not order memory
adds no ordering edges. Put an edge from an earlier event to a later event
when both belong to the same thread; happens-before is the transitive closure
of these edges. This also allows ordering through several primitives and an
intermediate thread. Two accesses conflict when different threads access the
same address and at least one access writes. A trace is memory-DRF when every
conflicting pair is ordered by happens-before.

The participation condition is checked on the reference trace. In the fully
convergent configuration, every primitive instance must contain the complete
warp. The partially convergent configuration admits the groups determined by
the reference, but still requires the complete warp for a primitive that
declares this requirement. Groups must be nonempty and contain distinct
threads from the warp. The theorem below assumes this condition together with
reference memory DRF; it does not assume agreement with the target's groups.

\subsection{Reordering Reference Steps}

The following lemmas concern reference steps with arbitrary scheduling.
They let us rearrange a completed reference execution to follow a chosen
sequence of enabled actions. We first establish that enabled actions remain
available, then show when their order can be exchanged without changing their
effects. The next subsection uses these lemmas to reason about the target's
delayed writes.

\begin{lemma}[Persistence of enabled actions]
\label{lem:warp-enabled-persistence}
If action \(a\) is enabled in state \(s\), and a distinct action \(b\) takes
\(s\) to \(s'\), then \(a\) is enabled in \(s'\).
\end{lemma}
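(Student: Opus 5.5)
I will argue by case analysis on the two kinds of actions, \(a\) and \(b\), using that the reference step relation is deterministic for a fixed action and state. An action is either a thread step of a single thread \(t\) or the joint step of a ready instance \(i\). The key observation is that a thread step of \(t\) inspects and modifies only \(t\)'s remaining code and local values, together with memory. Under the reference transition rules, memory affects only the \emph{value} a read returns, never whether the read is enabled. Hence a thread step is enabled exactly when thread \(t\) has not terminated and is not waiting at a primitive, a condition that depends only on \(t\)'s code.

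\emph{Case \(a\) is a thread step of \(t\).} If \(b\) is a thread step of a different thread \(u\ne t\), then \(b\) leaves \(t\)'s code untouched, so \(t\) is still neither terminated nor waiting, and \(a\) remains enabled. If \(b\) is a thread step of \(t\) itself, then \(b=a\) by determinism of a thread's next action, contradicting distinctness. If \(b\) is a joint step of an instance \(i\), then \(b\) only resumes threads waiting at \(i\). Since \(a\) is enabled, \(t\) is not waiting, so \(t\notin\) the participants of \(i\), and \(t\)'s code is unchanged.

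\emph{Case \(a\) is the joint step of instance \(i\).} Readiness requires that the waiting group \(G\) is nonempty and that every thread that may still reach \(i\) is in \(G\). I would show that \(b\) preserves both parts.
\begin{itemize}
\item If \(b\) is another instance \(j\ne i\), its participants are waiting at \(j\) and hence are not in \(G\), so \(G\) is unchanged. Resuming them can only shrink their reach set, because a thread at \(j\) that could still reach \(i\) would violate readiness of \(i\) unless it were in \(G\).
\item If \(b\) is a thread step of \(u\), then \(u\notin G\), since waiting threads cannot advance. Because \(a\) is ready, \(u\) cannot reach \(i\) from \(s\).
\end{itemize}

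The main obstacle is a monotonicity lemma on the SSO reach predicate: if thread \(u\) cannot reach instance \(i\) from its code, then after any step of \(u\) it still cannot. This is exactly the paper's remark that ``a thread excluded from an instance cannot become one of its participants later.'' I would prove it by structural induction on the remaining code and the step rules, in the following cases.
\begin{itemize}
\item \textbf{Branch resolution.} The test examines both arms of an unresolved conditional or switch, so choosing one arm only removes possibilities.
\item \textbf{Loop iteration.} Advancing an iteration moves the instance coordinates forward, and the reach test already covers the relevant future iterations.
\item \textbf{Loop exit.} Leaving a loop drops an iteration count. This is the delicate sub-case, since the instance identifier changes shape. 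Here I would check that the reach predicate for instances outside the loop was already computed using the post-loop continuation.
\item \textbf{Early exit.} \(\mathsf{break}\) and \(\mathsf{return}\) only truncate the remaining code.
\item \textbf{Reads.} A read binds a value but does not change the control structure the test inspects, and the test does not predict read results. So memory contents are irrelevant, and the lemma holds uniformly.
\end{itemize}
With this monotonicity lemma, every thread that may reach \(i\) in \(s'\) could already reach it in \(s\), and therefore lies in \(G\). Hence \(i\) remains ready and \(a\) remains enabled in \(s'\).
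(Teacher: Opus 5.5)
Your proposal is correct and follows the paper's proof. Both arguments split on whether each action is a thread step or a joint primitive step, and both use that waiting threads cannot advance and that distinct ready instances have disjoint waiting groups. Both then rest on the same monotonicity fact, that a thread excluded from an instance cannot later become a possible participant, proved by induction on the remaining code (branch choice, loop advance or exit, value substitution), where the result binding at resumption after a primitive falls under the value-substitution case.
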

\begin{proof}
A thread step changes only that thread's remaining code. It may change
memory, but another thread's next instruction remains executable, possibly
with a different read value. A thread waiting at a primitive cannot advance
by itself, so a ready primitive cannot remove an enabled thread step.

For a primitive step, the important property is that a thread excluded from
an instance cannot become a possible participant later. This follows by
induction on the remaining code. A branch chooses one of the paths already
considered; a loop advances its iteration count or exits; and substituting
local values does not introduce new primitive sites. These operations can
remove possible instances but cannot add an instance previously excluded.
Once an instance is ready, every possible participant is waiting and cannot
advance independently. Other threads therefore cannot change its group or
its stored arguments. Distinct ready instances also have disjoint waiting
groups, since one thread cannot wait at two instances. Executing one leaves
the other ready. These cases cover both kinds of action.
\end{proof}

Persistence alone does not justify reordering: two enabled accesses might
still conflict. The next lemma uses memory DRF to rule out that case.

\begin{lemma}[Adjacent exchange]
\label{lem:warp-adjacent-exchange}
Suppose a memory-DRF execution contains consecutive steps
\[
  s \xrightarrow{a/e} s_1 \xrightarrow{b/f} s_2,
\]
where \(a\ne b\) and \(b\) is already enabled in \(s\). Then there are states
\(s_1'\) and \(s_2'\) such that
\[
  s \xrightarrow{b/f} s_1' \xrightarrow{a/e} s_2',
  \qquad s_2 \simeq s_2'.
\]
The exchanged execution remains memory-DRF and preserves each thread's read
history and each instance's participants.
\end{lemma}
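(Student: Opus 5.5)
The plan is a case analysis on the kinds of \(a\) and \(b\): each is either a single-thread step or a joint primitive step. Lemma~\ref{lem:warp-enabled-persistence} already gives that \(b\) is enabled in \(s\) and, once \(b\) has run from \(s\) to some \(s_1'\), that \(a\) is enabled in \(s_1'\). So the exchanged sequence exists. It remains to show that it records the same events \(e,f\) and ends in a state equivalent to \(s_2\). Because steps are deterministic for a fixed action and state, and equivalent states enable the same actions with the same effects, it suffices to check three things: the remaining code of each thread agrees; memory agrees at every address; and the values returned by reads (hence recorded events and bound locals) agree.

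\emph{Thread/primitive disjointness.} The first step is to show that \(a\) and \(b\) touch disjoint sets of threads. Two single-thread steps are for distinct threads, since a thread has only one next action. A primitive step involves only its waiting group, and a thread waiting there cannot take an independent step. Two distinct ready primitives have disjoint waiting groups, as noted in the proof of Lemma~\ref{lem:warp-enabled-persistence}. Hence each action modifies only its own threads' code and locals. Primitive steps leave memory untouched and their results depend only on the stored arguments of their participants, which the other action cannot change. So any case involving a primitive commutes outright, with identical events and identical states.

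\emph{Two memory-touching thread steps.} The remaining case is two thread steps by distinct threads \(t\ne u\), where local computations and branches are memory-independent and commute trivially. If both access memory at different addresses, or both read, then reads see the same values and the final memories coincide. The only obstruction, and the main step, is when \(e\) and \(f\) access the same address and at least one writes, i.e.\ they conflict. Here memory DRF is used. Since \(e\) and \(f\) are adjacent in the trace with no event between them, any happens-before path from \(e\) to \(f\) would have to consist of direct edges between events belonging to a common thread. But \(e\) belongs only to \(t\) and \(f\) only to \(u\), and there is no intermediate event through which a chain could pass. So neither \(e\mathrel{\mathsf{hb}}f\) nor \(f\mathrel{\mathsf{hb}}e\) holds, contradicting memory DRF. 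Some care is needed because happens-before is defined on the whole trace: I would argue that any chain from \(e\) to \(f\) must move forward in trace position and change thread only at a shared event, of which none lies strictly between them.

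\emph{Preservation properties.} Finally, the exchanged trace is the original with two adjacent non-conflicting (or non-memory) events swapped, and it contains the same events. Per-thread order is unchanged because the two events belong to distinct threads, and every conflicting pair keeps its relative order, except the swapped pair, which does not conflict. Hence each happens-before edge is preserved and the trace remains memory-DRF. Each thread's read history and each instance's participant group are carried by the unchanged events, so they are preserved as well.
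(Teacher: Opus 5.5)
Your proof has the same structure as the paper's. It uses the three cases on the kinds of \(a\) and \(b\). It uses the observation that a happens-before path between adjacent events of different threads would need an intermediate event, so memory DRF rules out a conflict. It uses the fact that reference primitive steps leave memory alone. And it uses the trace argument that swapping two adjacent events sharing no thread preserves every edge.

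There is one step you leave unjustified, and it is where the SSO waiting rule does its work. In the thread/primitive case you show only that the stepping thread \(u\) is not in the instance's \emph{current} waiting group \(W\). You then say the primitive commutes because the other action cannot change its participants' stored arguments. But the primitive's event records its participant group, and that group is read off the state, which \(u\)'s step does change. Suppose \(u\)'s step were its (silent) arrival at that very instance. Then running the primitive after \(u\)'s step would give the group \(W\cup\{u\}\) rather than \(W\). That is a different event, and for a memory-ordering primitive it has different happens-before edges. Your remark that ``each action modifies only its own threads' code and locals'' does not exclude this, since readiness and the group depend on every thread's code.

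The missing fact is the one the paper invokes explicitly. In either order the instance is ready in \(s\): it is either \(b\), enabled there by hypothesis, or \(a\), taken from \(s\). By the SSO rule, every thread that may still reach a ready instance is already waiting there. Since \(u\) is not waiting, it can never reach or join the instance. So the group, the recorded event, and the ordering edges are the same in both orders. The same readiness argument, applied to the other instance, gives group stability in the primitive/primitive case.

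This is exactly the property \textsf{Spec} lacks: a primitive may run while a thread that could still join is undecided. It therefore cannot be absorbed into a generic ``disjoint threads commute'' argument. With that sentence added, your proof goes through.
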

\begin{proof}
There are three cases. First, suppose both actions advance threads. The
threads are distinct, so their local code and local values are separate. If
both steps access memory, their events are adjacent in the trace. A
conflicting pair could not be ordered: the events have different issuing
threads, and there is no intervening event through which a happens-before
path could pass. A path in the reverse direction is impossible because all
edges advance through the trace. Memory DRF therefore excludes a conflict.
The two steps either use different addresses or both read the same address,
so exchanging them preserves their values and the resulting memory. Silent
local and control steps also leave the other thread's state unchanged.

Second, suppose one action advances a thread and the other executes a
primitive. The thread is not a participant: participants are waiting and
cannot advance independently. By the waiting-rule argument in
Lemma~\ref{lem:warp-enabled-persistence}, the thread cannot join this ready
instance later. The primitive's participants, arguments, and results remain
unchanged. Its joint step changes only those participants' local code and
values; under the reference rules it does not change memory. Both orders
therefore produce equivalent states and the same events.

Third, suppose both actions execute primitives. Their instances are distinct
and their participant groups are disjoint. Neither joint step changes the
other's arguments or participants, and neither changes memory. Deterministic
primitive results give the same resulting state in either order.

It remains to check the trace, not just the state. Whenever both steps record
events, the exchanged events share no thread for purposes of memory ordering.
Exchanging their adjacent positions consequently preserves every
happens-before edge after renaming those positions: only the relative order
of this pair changes, and there is no edge between them. The transitive
closure and memory DRF are preserved as well. Read events keep their values,
and reads by one thread are never exchanged with each other. Primitive
events keep their instance and group. If either step is silent, the recorded
trace does not change. Finally, state equivalence permits the remaining
execution to be replayed with the same events.
\end{proof}

\begin{lemma}[Moving an enabled action first]
\label{lem:warp-move-first}
Given a finished memory-DRF execution from \(s\), any action enabled in
\(s\) can be moved to its beginning. The resulting execution is finished,
remains memory-DRF, and has equivalent final state and the same read histories
and participants.
\end{lemma}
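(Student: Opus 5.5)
The plan is induction on the length \(n\) of the given finished memory-DRF execution
\[
  s = t_0 \xrightarrow{a_1/e_1} t_1 \xrightarrow{a_2/e_2} \cdots \xrightarrow{a_n/e_n} t_n,
\]
with the action \(a\) enabled in \(s\) fixed. Since the execution is finished and \(a\) is enabled in \(s\), \(a\) must occur somewhere in it. If \(a\) were never taken, Lemma~\ref{lem:warp-enabled-persistence} applied step by step would keep \(a\) enabled in every \(t_j\) with \(j\) past any step distinct from \(a\). In particular \(a\) would be enabled in \(t_n\), contradicting that no action is enabled in a finished state. So let \(k\) be the least index with \(a_k = a\); the case \(k=1\) is immediate.

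For \(k>1\), I first use Lemma~\ref{lem:warp-enabled-persistence} again, inducting along \(a_1,\dots,a_{k-1}\) (each distinct from \(a\) by minimality of \(k\)), to conclude that \(a\) is enabled in \(t_{k-2}\). Then Lemma~\ref{lem:warp-adjacent-exchange}, applied to \(t_{k-2}\xrightarrow{a_{k-1}}t_{k-1}\xrightarrow{a}t_k\), swaps the two steps and yields a state \(t_k' \simeq t_k\). Because equivalent states enable the same actions and take them to equivalent states with the same events, the suffix \(a_{k+1},\dots,a_n\) replays from \(t_k'\). The result is finished, has final state equivalent to \(t_n\), and by the lemma is still memory-DRF with the same read histories and participants. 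In this new execution \(a\) sits at position \(k-1\), and I repeat the swap. Formally, I carry an inner induction on \(k\) with the invariant ``finished, memory-DRF, final state \(\simeq t_n\), same read histories and participants, \(a\) first occurs at position \(k\).'' The invariant composes, because \(\simeq\) is an equivalence and the preserved quantities are equalities.

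The main obstacle is the replay step. Equivalence \(\simeq\) compares remaining code and memory, but to reuse the suffix verbatim I must also check that the recorded suffix events are literally identical, and that the happens-before and DRF status of the whole concatenated trace is preserved, not just of the swapped pair. The adjacent-exchange lemma states that its exchanged execution remains memory-DRF, and I would read that as applying to the whole trace including the suffix. I would make this explicit by showing that the trace changes only by a transposition of two adjacent events sharing no thread (or one silent event). Under that transposition, happens-before edges are preserved, since edges only connect same-thread events, whose relative order is unchanged. A secondary subtlety is that the suffix now starts from \(t_k'\) rather than \(t_k\), so the replay must go through equivalence at each step by induction on the suffix length.
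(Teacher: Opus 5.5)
Your proof is correct and uses the same ingredients as the paper's: Lemma~\ref{lem:warp-enabled-persistence} keeps \(a\) enabled, Lemma~\ref{lem:warp-adjacent-exchange} justifies each swap, and state equivalence lets the suffix be replayed. The difference is only in organization: you locate the first occurrence of \(a\) and bubble it backward, whereas the paper inducts on execution length, moving \(a\) to the front of the suffix and then swapping it once with the first step. Working on the whole trace throughout spares you the paper's side argument that the suffix is itself memory-DRF, while the paper's induction obtains the existence of \(a\) in the execution for free from the vacuous base case.
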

\begin{proof}
Induct on the length of the finished execution. An empty execution cannot
have an enabled action because its starting state is already finished. If
the first action is the desired action, nothing changes. Otherwise, the
desired action remains enabled after the first step by
Lemma~\ref{lem:warp-enabled-persistence}. The suffix is memory-DRF: a
happens-before path between suffix events cannot pass through an earlier
event. Apply the induction hypothesis to move the desired action first in
that suffix, then apply Lemma~\ref{lem:warp-adjacent-exchange} to exchange it
with the original first step. Replaying the remaining suffix from the
equivalent state completes the construction. Each exchange preserves the
ordering relations in the whole trace, including its unchanged prefix.
This argument uses completion, not fairness: an enabled action cannot remain
unexecuted when the execution ends in a state with no enabled actions.
\end{proof}

\begin{lemma}[Matching a finite prefix]
\label{lem:warp-prefix-agreement}
Let \(R\) be a finished memory-DRF execution from \(s\) using reference
steps. For any finite execution \(T\) from \(s\) using those steps with any
schedule, exchanges transform \(R\) into an execution
whose initial steps are exactly \(T\). The remaining steps finish the
execution with a state equivalent to the final state of \(R\).
\end{lemma}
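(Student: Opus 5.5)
The plan is an induction on the length of \(T\), using Lemma~\ref{lem:warp-move-first} repeatedly. The invariant after \(k\) steps of \(T\) is the following. There is a finished memory-DRF execution \(R_k\) from \(s\) whose first \(k\) steps are exactly the first \(k\) actions of \(T\). Its final state is equivalent to that of \(R\), and it has the same read histories and participants as \(R\). For the base case, take \(R_0 = R\).

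For the inductive step, write \(t_{k+1}\) for the \((k+1)\)-st action of \(T\) and \(u_k\) for the state reached after the first \(k\) actions. Because \(T\) is an execution, \(t_{k+1}\) is enabled in \(u_k\). Split \(R_k\) into its common prefix and the suffix \(R_k'\) starting at the state reached by that prefix. Since the prefix consists of the same actions from the same start, and steps are deterministic, that state is \(u_k\) up to equivalence. Equivalent states enable the same actions, so \(t_{k+1}\) is enabled at the start of \(R_k'\).

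The suffix \(R_k'\) is finished. It is also memory-DRF, since a happens-before path between suffix events cannot pass through an earlier event, as observed in the proof of Lemma~\ref{lem:warp-move-first}. So Lemma~\ref{lem:warp-move-first} applies: move \(t_{k+1}\) to the front of \(R_k'\). This yields a finished, memory-DRF suffix with an equivalent final state and the same reads and participants.

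Reattach the unchanged prefix to get \(R_{k+1}\). Its first \(k+1\) steps are those of \(T\), and its final state is equivalent to that of \(R_k\), hence to that of \(R\). Whole-trace memory DRF survives the reattachment because each exchange preserves happens-before edges in the full trace, including edges from the untouched prefix. That is the remark at the end of Lemma~\ref{lem:warp-move-first}, which I would make explicit by applying the lemma to the whole execution relative to the prefix. When \(k=|T|\), the initial steps are exactly \(T\) and the remaining steps finish in a state equivalent to the final state of \(R\).

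The main obstacle is the bookkeeping around equivalence rather than equality. After moving an action forward, the replayed suffix starts from a state equivalent to, not identical with, the original one. So the claim ``the first \(k\) steps of \(R_k\) are exactly \(T\)'' must mean the same actions producing the same events, with states matching up to \(\simeq\). I would state the invariant in that form and rely on the fact from the states-and-steps paragraph: equivalent states enable the same actions, and taking the same action produces the same event and equivalent successors. That lets both the enabledness of \(t_{k+1}\) and the final-state equivalence transfer along the chain.

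The finiteness of \(T\) matters because each move requires a finished execution containing the action. Persistence (Lemma~\ref{lem:warp-enabled-persistence}) guarantees that an enabled action must eventually occur in \(R_k'\), and this is exactly what Lemma~\ref{lem:warp-move-first} uses internally.
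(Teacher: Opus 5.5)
Your proof is correct and follows essentially the paper's argument. Both induct on \(T\), and both use the lemma on moving an enabled action first to bring the next action of \(T\) to the front of a finished, memory-DRF suffix, with determinism making the matched states agree. The paper peels off \(T\)'s first action and recurses on the suffix starting from the common successor state, generalizing over the start state, while you extend a matched prefix from a fixed \(s\); this is the same induction read in the opposite direction.
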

\begin{proof}
Induct on the steps of \(T\). The empty prefix needs no exchanges. Otherwise,
let \(a\) be its first action. Its actual step establishes that \(a\) is
enabled in \(s\); this is not an extra assumption about the target schedule.
Use Lemma~\ref{lem:warp-move-first} to move \(a\) first in \(R\). Both
executions now take the same action from the same state, so determinism gives
the same event and next state. The rearranged suffix is finished and
memory-DRF. Apply induction to the remaining steps of \(T\) and that suffix.
Composing the exchanges yields the required execution and preserves its
final state up to equivalence.
\end{proof}

\subsection{Agreement with Delayed Visibility}

\paragraph{Memory rules}
The delayed-visibility model keeps shared memory and a list of pending writes
in issue order, following the approach of delaying writes until
synchronization~\cite{betts2012gpuverify}. Each pending write records its
thread, address, and value. A write joins this list without changing shared
memory. A read returns its thread's latest pending write to that address, if
one exists, and otherwise reads shared memory. A primitive with
\(\mathsf{ord}(p)=\mathsf{sync}\) publishes all pending writes of its
participants, in issue order, and removes them from the list. A primitive
with \(\mathsf{ord}(p)=\mathsf{nosync}\) publishes none. Published values are
visible to all threads, but only participants gain ordering edges from the
primitive. At termination, all remaining writes are published before final
memory is observed. As in GPUVerify, an execution aborts as soon as it races,
that is, as soon as its trace contains a conflicting pair that happens-before
does not order. This generalizes GPUVerify's check between barriers to
primitives with partial groups. Unlike GPUVerify, threads do not run in
lock-step: the SSO waiting rule is unchanged.

\begin{lemma}[A pending write cannot conflict with another thread]
\label{lem:warp-pending-write}
Suppose a memory-DRF trace contains a write \(w\) by thread \(t\), followed
by an access \(o\) by a different thread, with no intervening memory-ordering
primitive in which \(t\) participates. Then \(o\) cannot access the address
written by \(w\).
\end{lemma}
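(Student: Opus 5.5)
We argue by contradiction, using memory DRF of the trace together with the structure of the happens-before relation defined above. Suppose \(o\) accesses the address written by \(w\). Since \(o\) is by a different thread and \(w\) is a write, the pair \((w,o)\) conflicts. Memory DRF therefore gives \(w \mathrel{\mathsf{hb}} o\) or \(o \mathrel{\mathsf{hb}} w\). All happens-before edges go from an earlier trace position to a later one, and \(w\) precedes \(o\). So \(o \mathrel{\mathsf{hb}} w\) is impossible, and there is a happens-before path from \(w\) to \(o\).

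The main step is to show that such a path must pass through a memory-ordering primitive event in which \(t\) participates, located between \(w\) and \(o\). The path is a chain of edges, and each edge links two events that both belong to some common thread. The first edge leaves \(w\), which belongs only to \(t\), so its target \(x_1\) also belongs to \(t\). We split on the kind of event \(x_1\) is.
\begin{itemize}
\item If \(x_1\) is a memory event of \(t\), we repeat the argument from \(x_1\), which belongs only to \(t\).
\item If \(x_1\) is a primitive event that does not order memory, it belongs to no thread, so it contributes no edges and cannot be on the path.
\item Otherwise \(x_1\) is a memory-ordering primitive event belonging to \(t\).
\end{itemize}

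By induction on path length, consider the prefix of the path consisting of events that belong to \(t\) alone. This prefix must eventually end, because \(o\) belongs only to its own thread, which differs from \(t\). The path must therefore leave the set of events belonging only to \(t\). The only way to leave it is through an event shared between \(t\) and another thread, and by the definition of event ownership that event is a memory-ordering primitive in which \(t\) participates. Its trace position lies strictly after \(w\) and strictly before \(o\), because edges advance through the trace. This contradicts the hypothesis.

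The only delicate point is the case analysis on event ownership: \(w\) and \(o\) belong only to their issuing threads, nosync primitives belong to no thread, and sync primitives belong exactly to their participants. Once that is made explicit, the induction on the path is routine.
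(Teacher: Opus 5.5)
Your proof is correct and follows the paper's argument. You derive a conflict and invoke memory DRF, rule out $o \mathrel{\mathsf{hb}} w$ because edges advance through the trace, and observe that a happens-before path from $w$ to another thread must pass through a memory-ordering primitive containing $t$ between the two accesses. Your case analysis on event ownership spells out the step the paper states in a single sentence.
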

\begin{proof}
Suppose instead that the addresses coincide. The accesses then conflict.
To reach another thread, a happens-before path from \(w\) must pass through
a memory-ordering primitive containing \(t\), and none occurs between these
accesses. Thus \(w\)
cannot happen before \(o\). Nor can \(o\) happen before \(w\), since it
occurs later in the trace. This contradicts memory DRF.
\end{proof}

\paragraph{Matching the two memory models}
Call the memory obtained by applying all pending writes to shared memory in
issue order the \emph{logical memory}. Assume the original reference execution
terminates and is memory-DRF. We match each delayed-visibility prefix to
reference steps taking the same chosen actions, maintaining four facts:
\begin{enumerate}
  \item Their remaining thread code and local values are identical.
  \item Memory after the matching reference steps equals logical memory at
    every address.
  \item Each pending write appears in their common recorded history, and its
    thread has participated in no memory-ordering primitive since that write.
  \item Pending writes to the same address all belong to the same thread.
\end{enumerate}
These facts hold initially because memory agrees and there are no pending
writes. We prove their preservation by cases on the next delayed step.

For a read or write, the identical thread code fixes the next access's
thread, address, and mode. First consider the corresponding reference step,
without yet assuming that its read value agrees. This step is enabled.
Lemma~\ref{lem:warp-prefix-agreement} extends the resulting prefix to a
finished memory-DRF execution, using the original finished reference.
Lemma~\ref{lem:warp-pending-write} then excludes
any other thread's pending write to the accessed address. Consequently, the
accessing thread's view and logical memory agree there. A read returns the
same value in both models and gives the same remaining code. A write appends
the same value to the pending list that the reference step places in memory,
preserving logical-memory equality. The exclusion of other
threads' pending writes also preserves the fourth fact. This argument derives
read agreement from reference DRF; it assumes neither target DRF nor the
value of the delayed read.

Local computations, branch decisions, and loop steps use the same local
values, so they choose the same remaining code and leave the memory facts
unchanged. Primitive readiness, participants, and stored arguments depend
on that code and therefore also agree. The joint step gives each participant
the same result. A non-memory-ordering primitive changes neither memory nor
the pending list. At a memory-ordering primitive, publishing some threads'
writes leaves logical memory unchanged: by the fourth fact, all pending
writes to any one address belong to one thread, so they are either all
published or all retained. Publication preserves their issue order.
Retained writes belong only to nonparticipants, preserving the third fact.
Thus the four facts and the recorded events agree after every step.

\begin{theorem}[Single-warp SSO agreement with delayed visibility]
\label{thm:single-warp-sso}
For a fixed input to a supported program, suppose the reference
execution terminates and satisfies both WarpDRF conditions for either the
fully convergent or partially convergent configuration. No SSO execution with
delayed visibility races, so none aborts. Every terminating one has the same
participants at each instance, the same ordered reads for each thread, and
the same final memory as the reference.
\end{theorem}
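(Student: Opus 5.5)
I will assemble the theorem from the lemmas of this appendix. Fix the input and the finished reference execution \(R\); it is memory-DRF by condition~1. Let \(T\) be any SSO execution with delayed visibility. I will show by induction on the length of every finite prefix \(T_k\) of \(T\) that there is a matching sequence of reference steps, taking the same actions, for which the four invariants of the preceding paragraph hold. The induction step is exactly the case analysis already carried out there. For a memory access, Lemma~\ref{lem:warp-prefix-agreement} extends the matched reference prefix to a finished memory-DRF execution. Lemma~\ref{lem:warp-pending-write} then rules out another thread's pending write to that address, so the read sees logical memory. Local and control steps, and primitive readiness, participants, arguments and results, depend only on thread code and local values, which agree by invariant~(1). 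For memory-ordering primitives, invariant~(4) shows that publication preserves logical memory.

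\emph{No races.} The trace of \(T_k\) equals the trace of the matched reference prefix. By Lemma~\ref{lem:warp-prefix-agreement}, that prefix is the initial segment of a rearrangement of \(R\) that remains memory-DRF, since Lemmas~\ref{lem:warp-adjacent-exchange} and~\ref{lem:warp-move-first} preserve memory DRF. Happens-before on a prefix is the restriction of happens-before on the whole trace, because edges only go forward. Hence no conflicting pair in \(T_k\) is unordered. Since this holds for every \(k\), the abort rule never fires.

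\emph{Terminating executions.} Suppose \(T\) terminates. Its matched reference sequence is then a complete rearrangement of \(R\). By Lemma~\ref{lem:warp-prefix-agreement}, applied with \(T\) as the whole prefix, this rearrangement finishes in a state equivalent to the final state of \(R\), and the remaining suffix must be empty because no action is enabled. The exchange lemmas preserve each instance's participants and each thread's ordered read history. Therefore \(T\) has the same participants and reads as \(R\). Condition~2 places these groups among those that \(C\) guarantees, and it supplies the full-warp requirement where \(\mathsf{req}(p)=\mathsf{full}\). For final memory, the target publishes all pending writes at termination, so its final memory equals its logical memory. By invariant~(2), this equals the memory of the matched reference, which is equivalent to the final memory of \(R\).

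The main obstacle is the read step of the induction. There the reference step corresponding to the target's read must be justified as enabled and memory-DRF \emph{before} we know its value agrees. Otherwise we would be assuming what we want to prove. I would handle this as in the matching paragraph. First take the reference step with its own reference read value, extend it via Lemma~\ref{lem:warp-prefix-agreement} to a DRF completion, and only then use Lemma~\ref{lem:warp-pending-write} to derive that the delayed read returns the same value. A second point needs care: Lemma~\ref{lem:warp-prefix-agreement} requires the target's actions to be enabled under reference rules. This holds because the SSO waiting rule is shared by both models, and readiness depends only on thread code, which agrees by invariant~(1).
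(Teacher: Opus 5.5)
Your proposal is correct and follows essentially the same route as the paper. Both arguments use the same three ingredients. First, a step-by-step matching induction maintains the four invariants, with the read case justified by extending the reference step to a finished DRF completion before invoking the pending-write lemma. Second, race freedom follows from the prefix-matching lemma together with prefix-closure of memory DRF. Third, for terminating executions, the completion is empty, the exchanges preserve reads and participants, Condition~2 places the groups, and publishing the pending writes gives final memory.
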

\begin{proof}
The termination premise and Condition~1 supply the finished memory-DRF
reference needed by the preceding induction. For any delayed execution,
finished or not, that induction constructs an execution of reference steps
with exactly its events and remaining thread code. By
Lemma~\ref{lem:warp-prefix-agreement}, exchanges, which preserve memory DRF,
transform the reference into an execution that begins with this one. A
prefix of a memory-DRF trace is memory-DRF, because a happens-before path
between two of its events stays within it. So no delayed execution races,
and none aborts.

If the delayed execution terminates, so does its matching execution. Apply
Lemma~\ref{lem:warp-prefix-agreement} with this entire matching execution as
\(T\). Because \(T\) is finished, its constructed completion is empty. It
therefore differs from the original reference only by the permitted exchanges
and has an equivalent final state. Publishing the final pending writes makes
the target's observed memory equal to logical memory, which equals the memory
after the matching reference steps.

The exchanges preserve read histories, participants, and memory DRF, and the
target records the same events as its matching reference steps. In this SSO
model, reference memory DRF and the execution rules establish participant
preservation; Condition~2 ensures that the preserved groups satisfy the chosen
configuration's participation requirements. Read correspondence is by thread
and position in that thread's read history, not by position in the global schedule.
\end{proof}

\subsection{A Speculative Counterexample}
\label{sec:mechanized-spec}

Spec differs in when a primitive may execute. A thread that might still
branch away need not decide before the primitive runs. Spec uses the
delayed-visibility memory rules above, including aborting on a race, and
changes only primitive execution. Our Spec model executes each instance at
most once; a thread arriving after that execution cannot complete its call.
The counterexample below does not rely on such a late arrival.

Consider the program in Figure~\ref{fig:spec-counterexample}, with threads 0
and 1 forming the entire warp and \(g[0]\) initially 0. Each thread reads
\(g[0]\). If its read is 0, it executes a memory-ordering warp primitive;
after that primitive, thread 0 writes 1 to \(g[0]\). This adapts SIMT-Step's
Figure~10~\cite{chen2026simtstep} to ordinary memory accesses and a single
writer. Write \(R_t(v)\) and \(W_t(v)\) for a read and write of \(g[0]\)
by thread \(t\), and \(S(P)\) for the primitive with participants \(P\).

The reference trace is
\[
  R_0(0),\quad R_1(0),\quad S(\{0,1\}),\quad W_0(1).
\]
Thread 0 waits until thread 1 has decided whether to join. With no preceding
write, both read 0 and join the primitive. The only cross-thread conflicting
pair is thread 1's read and thread 0's write, and it is ordered by
\[
  R_1(0) \;\mathrel{\mathrm{hb}}\; S(\{0,1\})
  \;\mathrel{\mathrm{hb}}\; W_0(1).
\]
The reference is therefore memory-DRF and its one primitive contains the
complete warp. It satisfies both conditions in either configuration.

Spec also permits the execution
\[
  R_0(0),\quad S(\{0\}),\quad W_0(1),\quad R_1(0).
\]
Thread 0 executes the primitive while thread 1 has not yet decided whether
to join it, then writes 1. The write stays pending, so thread 1's read
returns 0. The primitive contains only thread 0, so nothing orders thread 0's
write before thread 1's read: the two race, and the execution aborts. Thread 1
never reaches the primitive, so Spec's assumption that it will not join is
never contradicted, and the execution is a valid Spec execution. The
reference is race-free and its primitive contains both threads, whereas this
Spec execution races and its primitive contains only thread 0. Thus a
program satisfying both reference conditions has a Spec execution that
violates the guarantee.

\ifshowdiscussion
\section{Discussion}

WarpDRF separates a common portability baseline from opportunities exposed by
individual programming models. The fully convergent model does not prescribe
how a backend schedules or reconverges threads. It requires each primitive
instance to contain the complete target warp and conflicting
accesses to be ordered at the proper scope. The agreement of all 16 tested
device and backend pairs on the CUDA configuration supports this as a practical
common baseline while preserving freedom in how it is implemented.

The partially convergent model is not a cross-language portability rule because
not every language guarantees maximal reconvergence. Instead, it gives partial
programs a precise reference result. For a language that does guarantee maximal
reconvergence, the same model describes an additional language-specific
guarantee. In particular, an explicit mask selects
participants for one primitive but does not establish a general reconvergence
rule. Portable libraries should therefore use full-warp interfaces as their
common path; any backend-specific use of partial participation requires a
separate justification from that backend's programming model.

The CUDA target experiment shows why this boundary matters to compilers and
tools. Outside full participation, source-to-PTX control-flow lowering changed
the dynamic grouping observed at the same source warp collective. Such a
difference is not automatically a compiler or specification violation, but it makes
implicit participation assumptions sensitive to optimization and target
selection. Verification should consequently keep participation and memory
ordering as independent checks. The Q8 pattern in
Figure~\ref{fig:validated-alarm-code}(b) can violate participation without a
conflicting memory pair, while a fully participating kernel can still contain a
scoped memory race, as in the reused reduction of
Figure~\ref{fig:reused-reduction}. Reporting the components
separately directs programmers to the relevant control-flow or mask discipline,
or to the required synchronization or disjointness argument.

\subsection{Programming with WarpDRF}

The full-participation contract gives programmers a practical source-level
rule: place warp primitives in warp-uniform control flow and use full-warp
masks. This rule does not require every thread to perform the same computation.
Values supplied to a warp collective may vary by thread, ordinary computation may
diverge between warp primitives, and memory accesses may remain guarded by
thread-dependent conditions. The relevant restriction is that such control flow
must not cause threads to skip a warp primitive or execute a different number of
warp primitives.

Several common shortcuts do not establish this rule. Placing a primitive after
a source-level join does not ensure that previously diverged threads execute it
together. A full mask names the intended participants, but it cannot make a
thread on another control path execute the primitive. Similarly,
\texttt{\_\_activemask()} reports the threads active at that point and may
therefore describe only a proper subset of the warp. Conversely, thread-varying
operands and guarded memory accesses are compatible with WarpDRF when every
thread still reaches the same warp primitives.

A common repair is to move thread variation from control flow into data. Rather
than returning before a reduction, each thread reaches the reduction and an
otherwise inactive thread contributes the collective's neutral value: zero for a
sum, negative infinity for a maximum, or false for a vote. Loads and stores that
require a valid index remain guarded. The rewritten program preserves thread-specific
work while making participation independent of the thread-specific condition.
For collectives without a neutral value, the same principle requires arranging
defined operands for all threads and guarding any use of a result that is not
needed by an inactive thread.

Participation does not replace memory synchronization. A full-warp collective
may still race when threads communicate through shared memory. Same-warp
communication requires a warp barrier with the documented memory semantics,
while communication or buffer reuse across warps requires block-scoped
ordering. Conversely, adding a memory barrier does not repair a warp collective
executed by only part of a warp. Code review and verification should therefore
ask two separate questions: whether all threads execute each warp primitive, and
whether every conflicting memory pair is ordered at the required scope.

\subsection{Implications for Programming Models and Compilers}

Our results expose four questions that a programming model should answer
independently. Which threads participate in a warp primitive? When are threads that
followed different control-flow paths guaranteed to rejoin? For a masked
primitive, which threads must reach the primitive and agree on the mask? Finally,
does the primitive order memory accesses? A mask identifies intended
participants but does not itself cause threads to rejoin, and a register
collective does not necessarily order memory. For operations such as
cooperative matrix instructions, the specification should state directly that
every thread of the warp must participate. If partial participation is permitted,
the specification should clearly define how the participating threads are
determined.

Compiler transformations may change the control-flow structure surrounding a
warp primitive, as the CUDA virtual-target experiment demonstrates. When a
source language guarantees a reconvergence rule, its compiler and intermediate
representations must preserve enough convergence information for later passes
to maintain that rule. When the language provides no such guarantee, different
participant groupings may be permitted, but optimization-sensitive grouping
makes source-level assumptions fragile. Explicit convergence structure and
diagnostics for warp primitives under non-uniform control would make these
assumptions visible without prescribing a particular hardware scheduling
strategy.

Verification tools should likewise keep participation and memory ordering as
separate results. A participation diagnostic should identify the controlling
branch, loop, exit, or mask that may exclude threads. A memory diagnostic should
identify the conflicting accesses and the synchronization scope needed to
order them. Launch and caller assumptions, such as divisibility or padding,
should be represented explicitly when they discharge either result. This
separation prevents a memory barrier from being presented as a repair for
missing participants, or uniform control from hiding an unordered memory
conflict.
\fi
\fi

\end{document}